\newif\ifdraft\draftfalse
\newcommand\modedraft[1]{#1}
\newcommand\todo[1]{{\color{purple}
[\textbf{To do:} #1]}}
\newcommand\cd[1]{\marginpar[{\color{cyan}\small\dbend}]{\color{cyan}\small\dbend}{\footnotesize \color{cyan}[#1 - \textbf{Catalin}]}}
\newcommand\bm[1]{\marginpar[{\color{blue}\small\dbend}]{\color{blue}\small\dbend}{\footnotesize \color{blue}[#1 - \textbf{Bastien}]}}
\renewcommand\sp[1]{\marginpar[{\color{orange}\small\dbend}]{\color{orange}\small\dbend}{\footnotesize \color{orange}[#1 - \textbf{Sophie}]}}
\newcommand{\bmchanged}[1]{{\color{blue}{#1}}}
\newcommand\cdchanged[1]{{\color{cyan}{#1}}}
\newcommand\spchanged[1]{{\color{orange}{#1}}}
\newcommand\modedraft[1]{}
\newcommand\todo[1]{}
\newcommand\cd[1]{}
\newcommand\bm[1]{}
\renewcommand\sp[1]{}
\newcommand\cdchanged[1]{#1}
\newcommand\bmchanged[1]{#1}
\newcommand\spchanged[1]{#1}
\title{The Expressive Power of Epistemic $\mu$-Calculus}
\author[1]{C\u at\u alin Dima}
\affil[1]{Universit\'e Paris Est, LACL (EA 4219), UPEC,  Cr\'eteil,
  France --
  \texttt{dima@u-pec.fr}}
\authorrunning{C. Dima} %mandatory. First: Use abbreviated first/middle names. Second (only in severe cases): Use first author plus 'et. al.'
\author[2]{Bastien Maubert}
\affil[2]{IRISA, Universit\'e de Rennes 1,
  Rennes, France --
  \texttt{bastien.maubert@irisa.fr}}
\authorrunning{B. Maubert} %mandatory. First: Use abbreviated first/middle names. Second (only in severe cases): Use first author plus 'et. al.'
\author[3]{Sophie Pinchinat}
\affil[3]{IRISA, Universit\'e de Rennes 1,
  Rennes, France --
  \texttt{sophie.pinchinat@irisa.fr}}
\authorrunning{C. Dima, B. Maubert and S. Pinchinat} %mandatory. First: Use abbreviated first/middle names. Second (only in severe cases): Use first author plus 'et. al.'
\subjclass{% F.3.1 Specifying and Verifying and Reasoning about
%   Programs, 
  F.4.1 Mathematical Logic, F.4.3 Formal Languages}% mandatory: Please choose ACM 1998 classifications from http://www.acm.org/about/class/ccs98-html . E.g., cite as "F.1.1 Models of Computation". 
\keywords{Epistemic \mucalc,  ATL with imperfect information, jumping
  tree automata, expressiveness}% mandatory: Please provide 1-5 keywords
\begin{document}

\maketitle

\begin{abstract}
While the $\mu$-calculus notoriously subsumes Alternating-time Temporal
Logic (ATL), we show that the \emph{epistemic} $\mu$-calculus does not
subsume ATL \emph{with imperfect information} (\ATLi), for the
synchronous perfect-recall semantics.
To prove this we first establish that  jumping parity tree  automata (JTA), a recently introduced extension
of alternating parity tree automata, are
expressively equivalent to the epistemic $\mu$-calculus,  and this for any knowledge semantics. %  This inexpressibility result entails 
%  that the class of jumping automata is not closed under projection and that
% the monadic second-order logic with equal-level predicate is strictly more 
% expressive than the epistemic mu-calculus. 
Using this result we also show that, for bounded-memory semantics, the epistemic
$\mu$-calculus is not more expressive than the standard $\mu$-calculus,
and that its satisfiability problem is \EXPTIME-complete. 
\end{abstract}
\section{Introduction}

The propositional \mucalc ($\Lmu$) \cite{DBLP:journals/tcs/Kozen83} is
\cdchanged{a logic of utmost importance} in theoretical computer science for several
main reasons. First, it is a powerful logic that captures all
$\omega$-regular properties that are used for the verification of
\bmchanged{dynamic systems' behavioral properties}. In particular, it subsumes
all classic temporal logics, such as \LTL, \CTL and $\CTLs$
\cite{emerson1990handbook}. Second, it enjoys deep connections with
several  paradigms that play a fundamental role in modern
approaches for the verification of reactive systems: it is equivalent
to alternating parity automata \cite[Chap.\
10]{gradel2002automata}, a powerful tool to design decision procedures
for temporal logics. $\Lmu$ is also closely related with parity games,
which are central both for modeling the interaction of systems and for
testing the satisfiability of temporal logics
\cite{{gradel2002automata}}.
\bmchanged{It can be used to specify strategic abilities in multi-player games \cite{DBLP:conf/atva/Pinchinat07},
and it subsumes % (in its ``modal'' variant)
logics of coalition and strategy like the 
Alternating-time Temporal Logic (ATL) \cite{alur2002alternating} and Strategy Logic \cite{ChatterjeeHP10strategy-logic}.}
Finally, its connection with more classic
logics is well understood as its expressive power coincides with the bisimulation
invariant fragment of the monadic second order logic (MSO) \cite{janin1996expressive}.

%Extending $\Lmu$ with quantifications over propositions does not add
%any expressivity, so that the logic can also be used to specify
%strategic abilities in multi-player games
%\cite{DBLP:conf/atva/Pinchinat07} -- a natural abstraction of e.g.\
%distributed computing systems. The computational complexity of this
%extension is non-elementary, justifying  investigations for
%designing hopefully ``cheaper'' logics for strategic reasoning, one of
%the first and
%most successful being Alternating-time Temporal Logic (ATL)
%\cite{alur2002alternating}; noticeably, $\Lmu$ subsumes ATL \bm{cite}.

While most \bmchanged{results concern} the perfect information setting
in which players/agents know the actual state of the system, realistic
applications led to consider  agents that have to strategize based
on a partial information of their environment.
% However, it is well known that multi-player games with imperfect
% information are intrinsically difficult to analyse [Reif].
This need % need to find tractable solutions
 gave rise to
a proliferation of frameworks to represent, reason about \bmchanged{and/or} strategize under
imperfect information.
% investigating on
% imperfect information is guided by realistic applications
% where individuals strategize upon a partial information of the whole
% picture. However, as established in [Reif], multi-player games with imperfect
% information are intrinsically difficult to analyse.
% Following this need, many proposals from the literature have
% emerged.
There are basically two trends. One trend relies on
extensions of previous strategic logics with additional constraints on
strategic abilities of players, that forces them to strategize
consistently with their available information.
\cdchanged{This is the case of variants of ATL with imperfect information like 
$\ATLi$, $\text{ATL}_{ir}$, $\text{ATLK}$ or $\text{ATEL}$
\cite{jamroga2006agents,wiebe-ATEL2003,schobbens-atl-ir2004} -- to cite only a few, 
see also \cite{bulling-atl-survey} for a recent survey of the various logics of this type}. 
The other trend is based on extensions of temporal logics
with epistemic features, sometimes also combined with the
concepts of the former. 
\bmchanged{Such logics include Epistemic Temporal Logic \cite{halpern1989complexity},  epistemic mu-calculus $\Kmu$, 
first introduced in \cite{shilov-garanina-fixpoints}, and}  \cdchanged{the epistemic alternating mu-calculus AMC 
\cite{bulling-jamroga-mu}}.

Comparing the two trends is necessary to share expertise, and it is
relevant to wonder whether $\Kmu$ has the same central position as the
standard \mucalc has in the perfect information setting.  Some results
are already known: the epistemic \mucalc subsumes Epistemic Temporal
Logic and Propositional Epistemic Dynamic Logic
\cite{shilov-garanina-fixpoints}, and a notion of Alternating
Epistemic Mu-Calculus that considers one-step strategic abilities
\cite{bulling-jamroga-mu}.  \bmchanged{It is also known from
  \cite{bulling-jamroga-mu} that $\ATLi$ is not subsumed by this
  Alternating Epistemic Mu-Calculus for a memoryless semantics with
  imperfect information.}

\cdchanged{Our contribution is threefold: 
first, we show that the epistemic \mucalc has the same expressive power as the 
recently introduced \emph{jumping  automata}, an extension of alternating parity tree automata that
allow for jumps between tree nodes \cite{maubertFSTTCS2013}.
The proof relies on the classic  result  that
the modal \mucalc is equivalent with alternating tree automata  \cite{gradel2002automata}
%\cite{walukiewicz-vectorial-mu}. 

Second, combining this general result with
\bmchanged{the fact that  jumping automata equipped with
\emph{recognizable} relations between tree nodes  translate in
linear time into two-way tree automata
 \cite{maubertFSTTCS2013}, we obtain two
corollaries: for bounded memory semantics, (1) $\Kmu$ is
not more expressive than $\Lmu$, and (2) the satisfiability
problem for $\Kmu$  is
\EXPTIME-complete.}

%  \cdchanged{\cite{maubertFSTTCS2013}}, 
% the latter of which have an \EXPTIME-complete emptiness problem \cite{DBLP:conf/icalp/Vardi98}.

Third, we prove that, unlike in the perfect information setting,
\bmchanged{$\ATLi$ is not subsumed by the epistemic \mucalc}: 
 we consider the formula 
 $\stratA[a]\F p $, which means that Alice has a uniform strategy (\ie
a strategy consistent with her observations) to
eventually reach $p$, 
and we show that if a jumping automaton accepts all the 
(tree) models of this formula
then it also accepts 
another model in which  Alice only has a non-uniform strategy to achieve $\F p$.
This result is proved for the synchronous and perfect recall semantics of indistinguishability.}
%We comment in the conclusions
%As a corollary of this, we may get that the monadic second order logic over trees, extended with 
%the equal-level predicate (see e.g. \cite{thomas-msoeqlevel}) is strictly more expressive than the epistemic \mucalc.
%Due to space constraints, we only mention this result in the conclusions section, leaving its statement and proof 
%or the full version of this paper. }

%Our contribution enlightens the power of both the epistemic \mucalc
%and $\ATLi$ by establishing that the most simple reachability-like
%statement in $\ATLi$ cannot be expressed in $\Kmu$ for the synchronous
%perfect recall semantics. To attain this result, we first characterize
%the expressive power of the epistemic \mucalc, for arbitrary
%semantics, in automata-theoretic terms. We prove that $\Kmu$ formulas
%are translatable in linear time into so-called \emph{jumping
%  automata}, an extension of alternating parity tree automata that
%allow for jumps between tree nodes. Combining this general result with
%previous developments on jumping automata equipped with
%\emph{recognizable} relations between tree nodes \sp{cite}, gives us two
%corollaries: (1) For bounded memory semantics, $\Kmu$ is
%not more expressive than $\Lmu$. (2) The satisfiability
%problem for $\Kmu$ is \EXPTIME-complete, since jumping automata then
%translate in linear time into two-way alternating parity automata \sp{cite}, the latter of which have an \EXPTIME-complete emptiness problem \sp{cite}.

The paper is organized as follows. In Section~\ref{sec-prelim}, first we
 introduce basic notations and we recall classic parity games
as well as game bisimulations. We then expose the
epistemic \mucalc, ATL with imperfect information, and jumping tree
automata. In Section~\ref{sec-equivalence} we prove that the epistemic
\mucalc is equivalent to jumping tree automata, from which we derive 
corollaries on the expressivity and the complexity of $\Kmu$ with
bounded memory. Using again the correspondence between $\Kmu$ and jumping
tree automata, we prove in Section~\ref{sec-inexp} that ATL with
imperfect information is not expressible in $\Kmu$, and we conclude in Section~\ref{sec-conclusion}, 
where we also comment on the impact of the results on the relationship between the epistemic \mucalc
and the monadic second order enriched with equal-level predicate (see e.g. \cite{thomas-msoeqlevel}).

\section{Preliminaries}
\label{sec-prelim}

In this section we set some notations concerning infinite trees and
parity games, and we recall the definitions of the three main objects
considered in this paper: \spchanged{epistemic \mucalc, ATL with imperfect
information, and jumping tree automata}.

% \subsection{Infinite trees}
% \label{sec-trees}

\newcommand{\APtree}{$\APf$-tree\xspace}

% For the rest of the paper, let $d\in \setn$ be a natural number (the
% maximum \emph{branching degree}), 

%For $d\in\setn$, we let $[d]\egdef\{0,\ldots,d-1\}$. 
A
\emph{tree} is a nonempty set $\tree\subseteq \setn^*$%  (for
% some $d$ called the \emph{maximum branching degree})
such that
if $x\cdot i\in \tree$, then $x\in \tree$ \spchanged{and $x.j \in \tree$ for all $j<i$}, and if $x \in \tree$, there
exists $i\in\setn$ such that $x\cdot i\in\tree$, and if $x.i$. The elements of $\tree$ are called \emph{nodes},
and the empty word $\epsilon$ is the \emph{root} of the tree. If
$x\cdot i \in \tree$,  $x\cdot i$ is a \emph{child} of
$x$. %\sp{``parent'' is unused}, and  $x$ is the \emph{parent} of $x\cdot i$.
The \emph{arity} of a node $x$
% $\arity(x)$,
is its number of children, and if every node of some tree
$\ltree$ has arity at most $k$, % \spchanged{(with value $k$ reached)},
$\tree$ is a \emph{$k$-ary tree}. % We will
% note $x\cdot \up$ the parent of a node $x$:  $(x\cdot i)\cdot
% \up\egdef x$. Note that $\epsilon \cdot \up$ is undefined as the
% root has no parent.
% The \emph{arity} of a node $x$, written
% $\arity(x)$, is the number of children of $x$. If all nodes have
% arity $k$, then $\tree$ is a $k$-ary tree. \fbox{Enlever definitions inutiles}
Given a node $x$ of a tree $\tree$, we let \spchanged{$\Paths_\tree(x)$ (or simply $\Paths(x)$)}  be the set of
infinite paths $\pi=x_0x_1\ldots$ in $\tree$ such that $x_0=x$ and for
all $i$, $x_{i+1}$ is a child of $x_i$. Also, for a path
$\pi=x_0x_1\ldots$ we let $\pi[i]:=x_i$.  For two nodes $x$ and $y$,
$y$ is a \emph{descendent} of $x$ (written $x\desc y$) \spchanged{if
  $x$ is a prefix of $y$, or equivalently} if $y$ can be found on some
path that starts in $x$. \spchanged{We denote by} $\subt{\tree}{x}$ the subtree of
$\tree$ rooted in $x$: $\subt{\tree}{x}=\{y\mid x\desc y\}$.

Trees may be \labeled with \emph{atomic propositions}
from a countably infinite set $\AP$  that we fix.
For a finite subset $\APf\subset\AP$ of atomic propositions, an \emph{\APtree} is a pair
$\ltree=(\tree,\lab)$, where $\tree$ is a tree and $\lab:\tree \to
2^\APf$ is a \emph{labelling} \spchanged{of the nodes}. 
\spchanged{A node $x$ in a tree is reached by a finite prefix $\rho$ of a path in $\Paths(\epsilon)$, say $\rho_x=x_0 \ldots x_n$ with $x_n=x$. 
We define the \emph{word} of $x$, written $\word(x)$, by $\lab(\epsilon)\lab(x_1)\ldots\lab(x_n)$.}

% For a node $x=i_1i_2\ldots i_n$ in $\tree$, $n\geq 0$, we define its
% \emph{node word} $\word(x)$ made of the sequence of labels from the
% root to this node: $\word(x) \egdef
% \lab(\epsilon)\lab(i_1)\lab(i_1i_2)\ldots \lab(i_1\ldots i_n)$. 

For
simplicity, we may write $x\in\ltree$ instead of $x\in\tree$. Finally,
if $\ltree=(\tree,\lab)$ is an \APtree, $p\in\AP$ and $S\subseteq
\tree$, we define $\ltree[p\to S]$ as the $(\APf\union \{p\})$-tree
$\ltree'=(\tree,\lab')$, where $\lab'(\noeud)=\lab(\noeud)\union\{p\}$
if $\noeud\in S$, and $\lab(\noeud)\setminus\{p\}$ otherwise. In other
words, $\ltree[p\to S]$ is the same tree as $\ltree$, except that we
make $p$ hold exactly on nodes in $S$.

% The notation $w\up$ is defined identically for node words as $x\up$
% is for nodes. Notice that $\word(\epsilon)=\lab(\epsilon)$, hence $\word(\epsilon)\up=\epsilon$.
%\pagebreak
% \begin{definition}
% \label{def-trees}
% Let $n\in\setn$, and let $\Sigma$ be a finite alphabet. An
% \emph{$n$-ary $\Sigma$-tree} is a partial mapping
% $\tree:\{1,\ldots,n\}^*\to \Sigma$ whose domain, $\dom(\tree)$, is prefix-closed.
% \end{definition}

% For an $n$-ary tree $\tree$, we call \emph{node} an element $\noeud$ of $\dom(\tree)$. A
% node $\noeuda$ is a \emph{child} of a node $\noeud$ if
% $\noeuda=\noeud\cdot k$ for some $k\in\{1,\ldots,n\}$. In that case,
% $\noeud$ is called the unique \emph{parent} of $\noeuda$. The empty
% word $\racine$ is called the \emph{root} of the tree.

\subsection{Parity games and game bisimulation}
\label{sec-games}

We define two-player turn-based parity games, that we use
to define acceptance of trees by parity tree automata. \bmchanged{We also
define  game bisimulations, recently introduced in
\cite{DBLP:journals/jolli/BerwangerK10}.} %But first a couple of notations: for an infinite word $w\in \Sigma^\omega$ where $\Sigma$ is some alphabet, 
% if $w=a_0a_1\ldots$, then for every $i\geq 0$, we let $w[i]\egdef
% a_i$ and $w[0,i]\egdef a_0a_1\ldots a_i$. Also, for a finite word
% $w= a_0\ldots a_{n-1}$, we let $|w|\egdef n$ denote its \emph{length}.

\spchanged{Fix an alphabet $\Sigma$. For an infinite word
  $w=a_0a_1\ldots \in \Sigma^\omega$ \bmchanged{and $i\geq 0$}, we let $w[i]\egdef a_i$ and
  $w[0,i]\egdef a_0a_1\ldots a_i$. For a finite
  word $u= a_0\ldots a_{n-1} \in \Sigma^*$, its \emph{length} is
  $|u|\egdef n$.}

\newcommand{\gav}[1]{\ga_{#1}}
%\sp{pour moi clash de notations entre les paths d'un arbre et les play des jeux}

% \begin{definition}
% \label{def-games}
We define two-player turn-based parity games: A \emph{parity game
  arena} is a tuple $\ga=(V,E,C)$, where $V$ is a set of
\emph{positions} partitioned between positions of Eve ($V_E$) and
those of Adam ($V_A$). \bmchanged{Binary relation} $E\subseteq V\times
V$ is a set of \emph{moves} that we assume total, \ie for all $v\in
V$, there is $v'\in V$ such that $(v,v')\in E$.  Finally, $C:V\to
\setn$ is a \emph{colouring function}.  A \emph{parity game}
$\game=(\ga,v_0)$ is a game arena $\ga=(V,E,C)$ together with an
initial position $v_0\in V$.  Given a parity game $\game=(\ga,v_0)$, a
\emph{play} $\pi\in V^\omega$ is an infinite sequence of positions
such that $\pi[0]=v_0$, and for all $i\geq 0$, $(\pi[i],\pi[i+1])\in
E$. A \emph{partial play} $\rho=v_0 \ldots v_n\in V^*$ is a finite
prefix of a play \spchanged{and it \emph{ends} in $v_n$}. A
\emph{strategy} $\strat$ for Eve is a partial function $\strat:V^*\to
V$ such that for all partial play $\rho$ ending in $v\in V_E$,
$\strat(\rho)$ is defined and $(v,\strat(\rho))\in E$. A play $\pi$
\emph{follows} a strategy $\strat$ if for all $i\geq 0$ such that
$\pi[i]\in V_E$, $\pi[i+1]=\strat(\pi[0,i])$, and similarly for
partial plays.  For a parity game $\game$ and a strategy $\strat$ for
Eve in $\game$, we denote by $\out(\game,\strat)$ \spchanged{the set
  of \emph{outcomes} of $\strat$, that is plays in $\game$ that follow
  $\strat$}. A play $\pi$ is \emph{winning} for Eve if the least
colour seen infinitely often along $\pi$ is even, otherwise $\pi$ is
winning for Adam. A \emph{winning strategy} for Eve is a strategy
whose outcomes are all winning for Eve.  Finally, \bmchanged{as we only consider
winning strategies of Eve,} we say that position $v$ of a game arena
$\ga$ is \emph{winning} if \spchanged{Eve} has a winning strategy in
$(\ga,v)$.
%\end{definition}

Berwanger and Kaiser introduce in
\cite{DBLP:journals/jolli/BerwangerK10} a notion of bisimulation
between parity games and they prove that two bisimilar games are
equivalent with regards to the existence of winning strategies
\footnote{Note that in \cite{DBLP:journals/jolli/BerwangerK10} the
  definitions are more general and consider games with imperfect information.}. This
result will be crucial to establish our nonexpressivity result in
Section~\ref{sec-inexp}.

\begin{definition}
\label{def-game-bisim}
Let $\ga=(V,E,C)$ and $\ga'=(V',E',C')$ be two game arenas. A
\emph{bisimulation} between $\ga$ and $\ga'$ is a binary relation
$Z\subseteq V\times V'$ such that:
\begin{description}
  \item[Colour Harmony:] for all $(v,v')\in Z$, $C(v)=C'(v')$,
  \item[Zig:] for all $(v,v')\in Z$, if there is $u\in V$ such that
    $(v,u)\in E$, then there is $u'\in V'$ such that $(v',u')\in E'$
    and $(u,u')\in Z$, and
  \item[Zag:] for all $(v,v')\in Z$, if there is $u'\in V'$ such that
    $(v',u')\in E'$, then there is $u\in V$ such that $(v,u)\in E$
    and $(u,u')\in Z$.
\end{description}
For initial positions $v_0\in V$ and $v_0'\in V'$, we say that
$(\ga,v_0)$ is \emph{bisimilar} to $(\ga',v_0')$, \spchanged{written $\ga,v_0 \bisim \ga',v_0'$}, if there is a bisimulation
$Z$ between $\ga$ and $\ga'$ such that $(v_0,v_0')\in Z$. 
\end{definition}

% Berwanger and Kaiser prove that if two games are bisimilar then they
% are equivalent in a sense that they define. Here we do not need the
% full power of their result, so we only state a weaker
% form.

\begin{proposition}[\cite{DBLP:journals/jolli/BerwangerK10}]
\label{prop-game-bisim}
For two game arenas $\ga$ and $\ga'$, and two respective
  positions $v$ and $v'$, if $\ga,v\bisim\ga',v'$, then $v$ is winning
  in $(\ga,v)$ if and only if $v'$ is winning in $(\ga',v')$.
\end{proposition}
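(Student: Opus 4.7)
The plan is to transfer a winning strategy across the bisimulation. Since $Z^{-1}$ is a bisimulation between $\ga'$ and $\ga$ whenever $Z$ is one between $\ga$ and $\ga'$, the two directions are symmetric, so I only need to show that if Eve has a winning strategy $\strat$ from $v$ in $(\ga, v)$, then she also has one $\strat'$ from $v'$ in $(\ga', v')$.

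I would define $\strat'$ together with a simulation between partial plays. Concretely, for every partial play $\rho' = v_0' \ldots v_n'$ in $(\ga', v')$ consistent with $\strat'$, I would inductively build a matching partial play $\rho = v_0 \ldots v_n$ in $(\ga, v)$ consistent with $\strat$ such that $(v_i, v_i') \in Z$ for every $i$. The base case $v_0 = v$, $v_0' = v'$ is given by assumption. At an inductive step where it is Eve's turn in $\ga'$, I would consult $\strat$ for the move $u = \strat(\rho)$ in $\ga$, and apply the Zig clause to $(v_n, v_n') \in Z$ and $(v_n, u) \in E$ to obtain $u'$ with $(v_n', u') \in E'$ and $(u, u') \in Z$; I would then set $\strat'(\rho') := u'$ and extend $\rho$ to $\rho \cdot u$. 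When it is Adam's turn in $\ga'$ and he plays some $v_{n+1}'$, the Zag clause provides a corresponding $v_{n+1}$ in $\ga$ with $(v_n, v_{n+1}) \in E$ and $(v_{n+1}, v_{n+1}') \in Z$, which extends $\rho$.

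Passing to the limit, every outcome $\pi' \in \out((\ga', v'), \strat')$ yields an outcome $\pi \in \out((\ga, v), \strat)$ with $(\pi[i], \pi'[i]) \in Z$ for all $i$. By Colour Harmony, $C(\pi[i]) = C'(\pi'[i])$ for every $i$, so $\pi$ and $\pi'$ witness the very same sequence of colours, and hence satisfy the parity condition simultaneously. Since $\strat$ is winning, $\pi$ is winning for Eve, and therefore so is $\pi'$, which shows that $\strat'$ is winning.

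The delicate point I would have to address is the implicit preservation of position ownership along $Z$: the clauses of the bisimulation as stated do not mention the partition $V = V_E \cup V_A$, whereas the construction above tacitly uses that $(v, v') \in Z$ implies $v \in V_E \iff v' \in V'_E$. In the Berwanger--Kaiser framework this is customarily built into the colouring convention so that Colour Harmony already delivers ownership preservation, and I would rely on such a convention here. The rest is a routine strategy-transfer argument, with dependent choice used to define $\strat'$ coherently on all partial plays consistent with it and extended arbitrarily elsewhere.
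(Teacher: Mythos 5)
The paper offers no proof of this proposition: it is imported verbatim from Berwanger and Kaiser \cite{DBLP:journals/jolli/BerwangerK10}, so there is nothing in-paper to compare your argument against. On its own merits, your proof is the standard strategy-transfer (copycat) construction and is essentially sound: maintain, alongside each $\strat'$-consistent partial play in $(\ga',v')$, a $Z$-matched $\strat$-consistent partial play in $(\ga,v)$, using Zig to read off Eve's moves and Zag to mirror Adam's; totality of the edge relations and Colour Harmony then give identical colour sequences on matched outcomes, hence identical parity verdicts, and the converse direction follows from $Z^{-1}$ being a bisimulation.

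The one delicate point is exactly the one you flag. As stated in Definition~\ref{def-game-bisim}, nothing forces $(v,v')\in Z$ to imply $v\in V_E \iff v'\in V'_E$, and without that the proposition is simply false: relate an Eve position having one winning and one losing successor to an otherwise identical Adam position, and the winner flips. Your construction tacitly uses ownership preservation in both the Zig step (it must be Eve's turn on both sides) and the Zag step (if it were Eve's turn in $\ga$ but Adam's in $\ga'$, the mirrored play could not be kept $\strat$-consistent). Appealing to the convention of the cited framework is a legitimate repair; it is also worth noting that in the paper's only application (Lemma~\ref{lem-bisim}) the relation $Z$ only pairs positions $(x,q,\alpha)$ and $(x',q,\alpha)$ with the same transition formula $\alpha$, and ownership is determined by the shape of $\alpha$, so the needed condition holds there by construction. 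With that caveat made explicit, your proof is complete.
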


\subsection{Epistemic \mucalc }
\label{sec-mucalc}

We fix $\setvarsec=\{X,Y,\ldots\}$ a countably infinite set of
 \emph{second
  order variables}. Given  a finite set of
\emph{agents} $\agents$, the syntax of the epistemic \mucalc $\Kmu$ is defined by
the following grammar:
\[\phi ::= X \mid p \mid \neg \phi \mid \phi \ou \phi \mid \diam
\phi \mid \Ki \phi \mid \mu X.\phi(X)\]
where  $X\in \setvarsec$, $p\in\AP$, $i\in\agents$, and in the last rule $X$ appears
only positively (under an even number of negations) in $\phi(X)$.
For a \emph{finite} set of atomic propositions $\APf\subset \AP$, we denote by $\Kmu(\APf,\agents)$, or simply $\Kmu$ when the
parameters are irrelevant, the set of formulas of the epistemic
\mucalc that only use atomic propositions in $\APf$ and agents in $\agents$.

A model of a formula in $\Kmu(\APf,\agents)$ consists in an
$\APf$-tree $\ltree$
together with a set of binary relations $\{\reli\}_{i\in\agents}$
over $(2^\APf)^*$. % For each $i\in\agents$, $\reli\;\subseteq (2^\APf)^*
% \times (2^\APf)^*$
% represents Agent $i$'s
% observational and memorial abilities.
% Formally, we call \emph{epistemic $\Sigma$-tree} a
% tuple $\Ktree=(\ltree,\{\reli\}_{i\in\agents})$, where $\ltree=(\tree,\lab)$ is a
% $\Sigma$-tree, and for each $i\in\agents$, $\reli\subseteq
% \ltree\times\ltree$ is an \emph{epistemic relation} for agent $i$.
In the following, for two nodes $\noeud$ and $\noeuda$ in $\ltree$,
$\noeud\reli\noeuda$ stands for $\word(\noeud)\reli\word(\noeuda)$:
two nodes are related by $\reli$ if their node words  are related by $\reli$.
Intuitively,
$\noeud\reli\noeuda$ means that when the current node is $\noeud$, Agent
$i$ considers possible (up to her knowledge) that node $\noeuda$ is
the current node.
Notice that the relation $\reli$ is arbitrary and not necessarily an
equivalence relation, as often assumed in epistemic logic.
From now on, whenever $\agents$ is clear from the context, $\setrel$ will
denote a \emph{relation profile} $\{\reli\}_{i\in\agents}$.
Finally, interpreting a formula  requires a
\emph{valuation}
$\val:\setvarsec\to 2^{\ltree}$; also, given $X\in\setvarsec$ and
$S\subseteq t$, $\val[S/X]$ is the valuation that maps $X$ to $S$, and
is equal to $\val$ on all other variables. 

The semantics of a formula $\phi\in\Kmu(\APf,\agents)$ on an
$\APf$-tree $\ltree=(\tree,\lab)$
with relation profile $\setrel$ and valuation $\val$ is the set of
nodes  $\sem{\phi}\subseteq\ltree$
 defined as follows:
\[
\begin{array}{clcl}
  \bullet & \sem{X}=\val(X) & \hspace{2cm}
  \bullet & \sem{p}=\{\noeud\in\ltree\mid p \in \lab(\noeud)\}\\[3pt]
  \bullet &  \sem{\neg\phi}=\ltree \sminus \sem{\phi} & \hspace{2cm}
  \bullet &  \sem{\phi\ou\psi}=\sem{\phi} \union \sem{\psi}\\[3pt] 
  \bullet &  \multicolumn{3}{l}{\sem{\diam\phi}=\{\noeud\in\ltree \mid \noeud\cdot i \in
  \sem{\phi} \mbox{ for some } i\in [k]\}}\\[3pt]
  \bullet & \multicolumn{3}{l}{\sem{\Ki\phi}=\{\noeud\in\ltree \mid \noeuda \in
  \sem{\phi} \mbox{ for all } \noeuda \mbox{ such that
  }\noeud\reli\noeuda\}}\\[3pt]
  \bullet &  \multicolumn{3}{l}{\sem{\mu X.\phi(X)}=\biginter\{S\subseteq \ltree \mid
  \sem[{\val[S/X]}]{\phi(X)}\subseteq S\}}
\end{array}
\]
Classically, for each formula $\mu X.\phi(X)$ in $\Kmu$, the fact that
$X$ appears only positively in $\phi(X)$ ensures that
$S\mapsto\sem[{\val[S/X]}]{\phi(X)}$ is a monotone function, 
and hence that its  least fixpoint exists.  $\sem{\mu
  X.\phi(X)}$ is defined to be this fixpoint.

% In a classic way, we define dual operators: $\phi\et\psi\egdef \neg(\neg \phi
% \ou \neg \psi)$, $\Box\phi\egdef \neg\diam\neg\phi$,
% $\dKi\phi\egdef\neg\Ki\neg\phi$ and $\nu X.\phi(X)\egdef \neg \mu
% X.\neg \phi(\neg X)$.
If $\phi\in\Kmu$ is a sentence, \ie it has no
free variables,  its semantics
is independent on the valuation, that we may omit from the
semantics. For a sentence $\phi\in\Kmu$, a relation profile $\setrel$
and a tree $\ltree$, we write
$\ltree,\setrel\models\phi$ for  $\racine\in\semb{\phi}$, and we let $\lang(\phi,\setrel)\egdef \{\ltree \mid
\ltree,\setrel,\racine\models\phi\}$.  Finally, we let
  $\Lmu$ denote the sublanguage of $\Kmu$ obtained by removing the
  modalities $\Ki$, and simply write $\ltree,\epsilon\models \phi$ as
  relation profile do not play any role in the semantics of
  $\Lmu$-formulas; thus, for $\phi\in\Lmu$ we may use
  $\lang(\phi)=\{\ltree\mid \ltree,\epsilon\models \phi\}$.

% Finally, we let $\Lmu$ denote
% the sublanguage of $\Kmu$ obtained by removing from the grammar the
% rule $\phi ::= \Ki\phi$, and for a formula $\phi\in\Lmu$ we let
% $\lang(\phi)=\{\ltree\mid \ltree,\epsilon\models \phi\}$ (note that
% for this sublanguage relation profiles can be removed from the
% models).

%\sp{je me suis arrêtée là et reprends après le skype avec Catalin à 14h30.}
\subsection{Alternating-time Temporal Logic with imperfect
  information}
\label{sec-ATL}

\newcommand{\APact}{\APf_{\mbox{\scriptsize act}}}
\newcommand{\ltreea}{\ltree}

We now recall the syntax and semantics of
Alternating-time Temporal Logic with imperfect information
(\ATLi). Again, let
$\agents$ be a nonempty finite set of agents. The syntax of $\ATLi(\agents)$ is
defined by the following grammar:
\[\phi ::=  p \mid \neg \phi \mid \phi \ou \phi \mid \stratA \X
\phi \mid \stratA \phi\until\phi\]
where $p\in\AP$ and $A\subseteq\agents$. 

The semantics of \ATLi is usually defined on concurrent game
structures (see \cite{alur2002alternating}). These are transition
systems with states \labeled by valuations over some finite set of
propositions $\APf$, and  where every transition
is \labeled by a compound action $a=(a_1,\ldots,a_k)$, which is
interpreted as
Agent~$i\in\agents$ playing action $a_i$
during this transition. The imperfect information is
 usually introduced by letting each agent observe
only a subset of $\APf$, and by deciding whether agents remember 
the past during a play or not. This induces, for each agent, an equivalence relation  between finite plays. 

In order to make the comparison with epistemic \mucalc easier, we instead
define the semantics of \ATLi on  what we call tree-arenas: 
\begin{definition}
\label{def-tree-arena}
Let $\APf\subset \AP$ be a \emph{finite} set of atomic propositions, and for each $i\in\agents$, let
$\setacti$ be a nonempty finite set of actions available to Agent~$i$. Define
$\setact\egdef\bigtimes_{i\in\agents}\setacti$, and let $\APact\egdef\{p_a\mid a\in\setact\}$ where each $p_a$ is an atomic
 proposition not in $\APf$. 
An \emph{$(\APf,\setact)$-tree-arena}  is an $(\APf\union\APact)$-tree $\ltreea=(\tree,\lab)$
such that $\lab(\racine)\inter\APact=\emptyset$, and for all $\noeud\in\tree\setminus\{\racine\}$,
$\lab(\noeud)\inter\APact$ is a singleton.
\end{definition}
 For the rest of this section, we fix a finite set $\APf\subset\AP$ and a finite set of actions
 $\setacti$ for each agent $i\in\agents$. For an $(\APf,\setact)$-tree-arena
$\ltreea=(\tree,\lab)$ and a node $\noeud\in\tree$, we  write
$\lab(\noeud)=(v,a)$, where $a\in\setact$ is the unique \bmchanged{(compound)} action 
such that $p_a\in\lab(\noeud)$, and $v=\lab(\noeud)\setminus\{p_a\}$.
% \emph{valuations} over $\APf$ will be denoted as $v,v'\ldots$ and
% tuples of actions in $\setact$ by $a,a'\ldots$  Labels of nodes in
% a tree-arena are thus of the form
%  $(v,a)$, with $v\subseteq 2^{\AP}$ and $a\in\setact$.
 In addition,
 given $a=(a_1,\ldots,a_k)\in\setact$, $a^i$ will denote $a_i$.  Note
 that 
 a tree-arena $\ltreea$ can be seen as a concurrent
game structure: take a node $\noeud\in\ltreea$, and let
$(v,a)$ be its label. Node $\noeud$ can be seen as a
state of a transition system, $v$ as its label, and $a$
as the label of the only transition reaching $\noeud$. Concerning the
imperfect information, similarly to the 
previous section, we introduce agents' uncertainty by means of
binary relations $\{\reli\}_{i\in\agents}$ over
$(2^{\APf\union\APact})^*$.  
Conversely, the unfolding of every concurrent game structure with
imperfect information can be seen as a tree-arena equipped with a
relation profile. We now  adapt
 the classic semantics of ATL to our setting.

First we need a few more definitions. Fix an $(\APf,\setact)$-tree-arena $\ltreea$ and a
relation profile $\setrel$. A \emph{strategy} for Agent~$i$
is a function $\strati:\ltreea\to \setacti$, that defines the strategic
choice of Agent~$i$ in each possible situation. Because agents have
imperfect information, we classically require strategies to be
consistent with the information of the agent: if $\strati$ is a
strategy for Agent~$i$, we require that for each
$\noeud,\noeuda\in\ltreea$ such that $\noeud\reli \noeuda$,
$\strati(\noeud)=\strati(\noeuda)$ \bmchanged{(note that  strategies
  satisfying this requirement
are sometimes called \emph{uniform} strategies \cite{DBLP:journals/jancl/JamrogaA07})}.
For $A\subseteq
\agents$, we call \emph{$A$-profile} a tuple $\prof=(\strati)_{i\in A}$
where $\strati$ is a strategy for Agent~$i$, and given an $A$-profile
$\prof$ and $i\in A$, we let $\profi{i}$ denote the strategy of agent $i$
in $\prof$. The \emph{outcome} of an $A$-profile \bmchanged{$\prof$ for some $A\subseteq\agents$} is the set of
behaviours that follow the strategies in the profile, \bmchanged{defined as
follows. For a node $\noeud$ of $\ltreea$, $\out(\noeud,\prof)\subseteq
\Paths(\noeud)$ is the set of paths $\pi$ in $\ltreea$ that start in
$\noeud$ and such that for all $k\geq 0$, 
if $(v,a)$ is the label of $\pi[k+1]$, then
     $\profi{i}(\pi[k])=a^i$ for all $i\in A$}.

The semantics of an \ATLi-formula $\phi$ with atomic propositions in
$\APf$ is given with respect to an $(\APf,\setact)$-tree-arena
$\ltree=(\tree,\lab)$, a relation profile $\setrel$ and a node
$\noeud\in\ltree$:%, and it is defined inductively as follows:
\[
\begin{array}{cl}
  \bullet & \ltree,\setrel,\noeud\models p \mbox{ if }p\in v, \mbox{
    where }(v,a)=\lab(\noeud)\\[3pt]
  \bullet & \ltree,\setrel,\noeud\models \neg \phi \mbox{ if }\ltree,\setrel,\noeud\not\models\phi\\[3pt]
  \bullet & \ltree,\setrel,\noeud\models \phi\ou \psi \mbox{ if
  }\ltree,\setrel,\noeud\models\phi \mbox{ or }\ltree,\setrel,\noeud\models\psi\\[3pt]
  \bullet & \ltree,\setrel,\noeud\models\stratA \X \phi  \mbox{ if there is an
  $A$-profile $\prof$ such that:}\\[3pt]
& \hspace{1.3cm} \mbox{for all $\noeuda\in\ltree$, \bmchanged{if}
  $\noeud\reli \noeuda$ for some $i\in A$, \bmchanged{then} for all
  $\pi\in\out(\noeuda,\prof)$, }\\[3pt]
& \hspace{1.3cm}\ltree,\setrel,\pi[1]\models\phi \\[3pt]
  \bullet & \ltree,\setrel,\noeud\models\stratA \phi\until \psi  \mbox{ if there is an
  $A$-profile $\prof$ such that:}\\[3pt]
& \hspace{1.3cm} \mbox{for all $\noeuda\in\ltree$, \bmchanged{if}
  $\noeud\reli \noeuda$ for some $i\in A$, \bmchanged{then} for all  $\pi\in\out(\noeuda,\prof)$,}\\[3pt]
& \hspace{1.3cm} \mbox{there is $i\geq 0$ such that $\ltree,\setrel,\pi[i]\models \psi$, and for all $0\leq j <i$, }\ltree,\setrel,\pi[j]\models\phi
\end{array}
\]

We define the following classic \bmchanged{shorthands}:
$\true\egdef p\ou\neg p$, and $\stratA \F \phi\egdef \stratA \true
\until \phi$. Finally, for a formula $\phi\in\ATLi$,  \bmchanged{a set of
(compound) actions $\setact$} and a relation
profile $\setrel$, we let
$\lang(\phi,\setact,\setrel)\egdef\{\ltreea\mid \ltreea \mbox{ is a
}(\free(\phi),\setact)\mbox{-tree-arena s.t. } \ltreea,\setrel,\racine\models\phi\}$.

\begin{remark}
\label{rem-sem-ATL}
\bmchanged{% There are in the litterature several variants for the semantics of \ATLi, concerning the extent of the knowledge an agent
% should have  about her having a strategy to achieve some objective
% (see  for a detailed discussion
% on this matter). 
We consider here the most restrictive notion of
``having a strategy'', \ie having a strategy \emph{``de re''}
\cite{DBLP:journals/jancl/JamrogaA07}. 
% An agent has a strategy
% ``de re'' if there is a strategy that the agent  \emph{knows}
% it achieves the desired objective.
However, the
result that we prove in Section~\ref{sec-inexp} still holds
with less restrictive notions of strategies:
\emph{``de dicto''} strategies, % (the agent knows
% that she has a strategy to achieve this objective, but may not know
% what is the strategy)
 or simply uniform
strategies %(the agent may be unaware that she has a strategy to achieve the objective).
 }
\end{remark}

\subsection{Jumping tree automata}
\label{sec-jumping}

Jumping tree automata (JTA) were introduced  in
\cite{maubertFSTTCS2013,maubertphd}. 
Let $\agents$ be a finite set of agents. 
For a set $X$, $\boolp(X)$ is the set of positive boolean formulas
over $X$, \ie formulas built with elements of X as atomic
propositions and using only connectives $\ou$ and $\et$. We
also allow for formulas $\true$ and $\false$, and $\et$ has precedence
over $\ou$. Elements of $\boolp(X)$ are denoted by $\alpha,\beta\ldots$ 

\begin{definition}Let $\Dir = \{\diam,\Box\}\cup\bigunion_{i\in\agents}\{\jdiami,\jboxi\}$ be the set of
\emph{automaton directions}. A 
\emph{jumping automaton} is a tuple
$\auto=(\APf,Q,\delta,q_0,\couleur)$
where  $\APf\subset\AP$ is a finite  set of atomic propositions, $Q$ a finite set of states,
$q_0\in Q$ an initial state,  $\couleur:Q\to \setn$ a colouring function, and $\delta
: Q\times 2^{\APf} \rightarrow \boolp(\Dir\times Q)$ a transition
function.
\end{definition}

Let  $\auto$ be a JTA over $\APf$. The meaning of the \emph{jump directions} $\jdiami,\jboxi$ is given
by a relation profile $\setrel=\{\reli\}_{i\in\agents}$, where for
each $i$, $\reli\subseteq (2^\APf)^*\times(2^\APf)^*$.
The acceptance of an input tree $\ltree=(\tree,\lab)$  by  $\auto$
equipped with a relation profile $\setrel$ is 
defined on a two-player parity game between Eve (the proponent) and Adam (the opponent): let
$\ltree=(\tree,\lab)$ be an  \APtree, and let
$\auto=(\Sigma,Q,\delta,q_0,\couleur)$.  We define the game 
$\sgame{\ltree}=(V,E,\couleur',v_0)$: the set of positions
is $V=\tree\times Q \times \boolp (\Dir\times Q)$, the initial
position is $(\racine,q_0,\delta(q_0,\lab(\racine)))$, and a position
$(\noeud,q,\alpha)$ belongs to Eve if $\alpha$ is of the form $\alpha_1\vee
\alpha_2$, $[\diam,q']$ or $[\jdiami,q']$; otherwise it belongs to
Adam. %The colouring function is defined after the moves.
% When the last part of a position $(x,q,\alpha)$ is of no interest or
% is clear from the context we may write
% $(x,q,\_)$. % For example, when the game enters a node $x$ in
% state $q$, the position is always $(x,q,\delta(q,\lab(x)))$.
The  possible moves in $\sgame{\ltree}$ are the following:
\begin{alignat}{3}
& \text{$(x,q,\alpha_1 \;\op\; \alpha_2) \move (x,q,\alpha_i)$} & \text{ where 
    $\,\,\op\,\,\in\{\vee,\wedge\}$ and $i\in\{1,2\}$}\\ 
&\text{$(x,q,[\rond,q']) \move (y,q',\delta(q',\lab(y)))$} & \text{  where
$\rond \in \{\diam,\Box\}$ and $y$ is a child of $x$}\\
% & \text{$(x,q,[\stay,q']) \move (x,q',\delta(q',\lab(x)))$} &\\
% & \text{$(x,q,[\up,q']) \move (y,q',\delta(q',\lab(y)))$} & \text{  where $y$ is $x$'s parent}\\
& \text{$(x,q,[\jgeni,q']) \move (y,q',\delta(q',\lab(y)))$}  & \text{  where $\jgeni \in \{\jdiami,\jboxi\}$ and $x\reli y$}
\end{alignat}

% \begin{tabular}{ll}
% \label{eq-op} $(x,q,\alpha_1 \;\op\; \alpha_2) \move (x,q,\alpha_i)$ & where 
%     $\,\,\op\,\,\in\{\vee,\wedge\}$ and $i\in\{1,2\}$\\ 
% % %    & unless $x=\epsilon$ and $\phi_i=[\up,q']$, \\
% \label{eq-rond}  $(x,q,[\rond,q']) \move (y,q',\delta(q',\lab(y)))$ & where
%     $\rond \in \{\diam,\Box\}$ and $y$ is a child of $x$\\
% \label{eq-stay}  $(x,q,[\stay,q']) \move (x,q',\delta(q',\lab(x)))$ & \\
% \label{eq-up}  $(x,q,[\up,q']) \move (y,q',\delta(q',\lab(y)))$ & where $y$ is
%     $x$'s parent\\
% \label{eq-jgen} $(x,q,[\jgen,q']) \move (y,q',\delta(q',\lab(y)))$ & where
%     $\jgen \in \{\jdiam,\jbox\}$ and $x\rel y$
% \end{tabular}

Positions of the form $(x,q,\true)$ and  $(x,q,\false)$ are deadlocks,
winning for Eve and Adam respectively.
The colouring function $\couleur'$ of
$\sgame{\ltree}$ is inherited from the one
 of $\auto$: $\couleur'(x,q,\alpha)=\couleur(q)$.
% Most of the time the starting node $x_0$ will be the root $\epsilon$
% of the tree, and in this case we simply write $\sgame{\auto}{\ltree}{}$
% instead of $\sgame{\auto}{\ltree}{\epsilon}$. 
 A tree $\ltree$ is \emph{accepted} by
$\auto$ with relation profile $\setrel$ if Eve has a winning strategy in $\sgame{\ltree}$, and
we denote by $\lang(\auto,\setrel)$ the set of trees accepted by
$\auto$ equipped with relation profile $\setrel$. If $\auto$ is an alternating
automaton (\ie it only uses automata directions $\diam$ and $\Box$),
it needs not be equipped by a relation profile to evaluate trees, and
we write $\lang(\auto)$ for the set of trees it accepts.

\begin{remark}
In general, JTA can identify children of a given current node and send
different copies independently to each one of them. This ability is
not always needed, but quantifying (existentially or universally) over
children is sufficient.
% However, when
% studying temporal logics that cannot specify in which successor a
% property should hold, but
% can only existentially or universally quantify over successors,
% simpler models of alternating automata have been considered, which are
% sometimes called
This is the case in this work, 
reason why we have presented here a \emph{symmetric}
version of jumping tree automata, just like
 \emph{symmetric} alternating automata have sometimes been considered
(see \eg \cite{DBLP:journals/jacm/KupfermanVW00}).
\end{remark}

In the following, the size of a formula $\phi$,  written $|\phi|$, is its number of
subformulas, and the size of an automaton $\auto$, written $|\auto|$, is the size of its
transition function (\ie the sum of the sizes of formulas occuring in it).

% We classically define the
% following macros: $\true\egdef p\ou \neg p$, $\false\egdef \neg \true$, $\stratA \F \phi\egdef \stratA \true \until \phi$
% and $\stratA \G \phi\egdef \stratA \neg \F \neg \phi$.

\section{Equivalence of  jumping tree automata and epistemic \mucalc}
\label{sec-equivalence}

We show that \JTA\ and $\Kmu$ are equally expressive, as stated by the following theorem.

\begin{theorem}
\label{theo-equiv}
\hspace{1cm}
\begin{itemize}
\item For every formula $\phi\in\Kmu$, there exists a jumping automaton
$\auto_\phi$ such that for every relation profile $\setrel$,
$\lang(\phi,\setrel)=\lang(\auto_\phi,\setrel)$.
\item For every jumping automaton $\wauto$, there exists an $\Kmu$-formula $\phi_\auto$ such that for every relation profile $\setrel$, $\lang(\auto,\setrel)=\lang(\phi_\auto,\setrel)$.
\end{itemize}
Moreover, the translations are effective and linear.
\end{theorem}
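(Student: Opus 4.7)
The plan is to adapt the classical equivalence between alternating parity tree automata and the modal \mucalc, recalled in \cite[Chap.~10]{gradel2002automata}, to the epistemic setting. The key observation is that the epistemic modalities $\Ki$ on the logic side and the jump directions $\jdiami,\jboxi$ on the automaton side are in perfect symmetric correspondence: both appeal to the same relations $\reli$, and both translations are purely syntactic, so uniformity in the relation profile $\setrel$ comes for free.

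For the first direction, I would first push negations to the atoms, introducing the standard duals $\Box$ of $\diam$, $\hat K_i$ of $\Ki$ (with $\hat K_i\psi\egdef \neg\Ki\neg\psi$), and $\nu X.\psi$ of $\mu X.\psi$, so that every subformula of $\phi\in\Kmu$ appears in positive normal form. The states of $\auto_\phi$ are the subformulas of $\phi$, with $\phi$ itself as the initial state; priorities are inherited from the alternation depth of fixpoints, exactly as in the classical construction. The transition function is defined by a straightforward recursion on the syntax: the Boolean cases stay at the current node and recurse into subformulas, the modal cases set $\delta(\diam\psi,v)=(\diam,\psi)$ and $\delta(\Box\psi,v)=(\Box,\psi)$, fixpoint quantifiers unfold via $\delta(\sigma X.\psi,v)=\delta(\psi,v)$ for $\sigma\in\{\mu,\nu\}$, and variables unfold to the body of their binder. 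The new cases are the epistemic modalities, for which we set $\delta(\Ki\psi,v)=(\jboxi,\psi)$ and $\delta(\hat K_i\psi,v)=(\jdiami,\psi)$. By the very definition of $\sem{\Ki\psi}$ and of the moves of type (3) in $\sgame{\ltree}$, a $\jboxi$-move exactly realises the universal quantification over $\reli$-successors present in the semantics of $\Ki$, uniformly in $\setrel$; symmetrically for $\hat K_i$ and $\jdiami$. The equivalence $\lang(\phi,\setrel)=\lang(\auto_\phi,\setrel)$ is then proved by an induction on $\phi$ matching winning strategies in the evaluation game on the $\Kmu$ side with winning strategies in $\sgame{\ltree}$. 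The size of $\auto_\phi$ is manifestly linear in $|\phi|$.

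For the second direction, given $\auto=(\APf,Q,\delta,q_0,\couleur)$, I would introduce one second-order variable $X_q$ per state $q\in Q$ and translate each transition $\delta(q,v)\in\boolp(\Dir\times Q)$ into a $\Kmu$-formula by replacing $(\diam,q')$ with $\diam X_{q'}$, $(\Box,q')$ with $\Box X_{q'}$, $(\jdiami,q')$ with $\hat K_i X_{q'}$, $(\jboxi,q')$ with $\Ki X_{q'}$, and keeping Boolean connectives. The various $\delta(q,v)$ for $v\in 2^\APf$ are gathered into a single formula $\Phi_q(\bar X)$ per state using the propositional guard $\bigwedge_{p\in v}p\wedge\bigwedge_{p\notin v}\neg p$. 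This yields a hierarchical system of equations $X_q=\Phi_q(\bar X)$ equipped with priorities $\couleur(q)$, which, by the classical translation (ordering the variables by decreasing priority and alternating $\nu$/$\mu$ according to parity), collapses into a single $\Kmu$-sentence $\phi_\auto$ of size linear in $|\auto|$. Because the translation is purely syntactic and independent of $\setrel$, the acceptance game $\sgame{\ltree}$ is isomorphic (up to a game bisimulation in the sense of Definition~\ref{def-game-bisim}) to the evaluation game for $\phi_\auto$ on $(\ltree,\setrel)$, so Proposition~\ref{prop-game-bisim} yields $\lang(\auto,\setrel)=\lang(\phi_\auto,\setrel)$.

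The hard part of this argument is essentially notational: the two constructions are the familiar ones for $\Lmu$ versus alternating parity tree automata, and the epistemic layer slots in effortlessly because $\Ki/\hat K_i$ and $\jboxi/\jdiami$ are by design defined through the same relations $\reli$. The only genuinely delicate point lies on the $\Kmu$-side, namely the bookkeeping needed to turn a system of fixpoint equations equipped with a priority function into a single nested $\mu/\nu$-formula of matching size; this is already handled in the perfect-information setting and transfers verbatim here.
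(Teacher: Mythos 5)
Your argument is correct, but it takes a genuinely different route from the paper. You re-run the two classical constructions from scratch (formula-to-automaton with subformulas as states, automaton-to-formula via a prioritised equation system), adding the new clauses sending $\Ki$ to $\jboxi$ and its dual to $\jdiami$; correctness then rests on redoing the usual game-theoretic induction, including the fixpoint/priority bookkeeping that you rightly flag as the delicate part. The paper avoids all of this by a reinterpretation: an $\APf$-tree equipped with a relation profile $\setrel$ is viewed as a multi-modal transition system with one accessibility relation per agent in addition to the child relation, under which a $\Kmu$-formula \emph{is} an $\LLmu$-formula and a jumping automaton \emph{is} an alternating automaton; the classical equivalence (Proposition~\ref{prop-equiv-classic}) is then invoked as a black box, and uniformity in $\setrel$ follows because the translation is performed at the level of syntax, before any relation is fixed. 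The two proofs produce essentially the same objects --- unfolding the paper's reduction yields exactly your transition function --- but yours is self-contained at the price of re-verifying the classical correctness argument, while the paper's is shorter and only requires checking that the reinterpretations are faithful and that language equality over all transition systems restricts correctly to the tree-shaped ones. One caution for your write-up: the collapse of the equation system into a single scalar sentence must use the priority-ordered elimination of the cited Chapter~10 to keep the size linear; a naive Beki\'c-style substitution would not give the bound claimed in the theorem.
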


The rest of this section is dedicated to the proof of Theorem~\ref{theo-equiv} and to two corollaries.

We rely on the classical equivalence between the multi-modal \mucalc, written here $\LLmu$,
and alternating tree automata, when interpreted over transition
systems: A (multi-modal, $\APf$-\labeled) \emph{transition system} is
a tuple $\sys=(\sstates,\{\sreli\}_{\bmchanged{i\in I}},\sval)$, where
$\sstates$ is a set of \emph{states}, \bmchanged{$I$ is a finite set of indices}, for each $i\in I$,
$\sreli\subseteq \sstates\times\sstates$ is a binary relation, and
$\sval:\sstates\to 2^\APf$ is a \emph{\labeling function}. 
We do not detail the semantics of the \mucalc and alternating automata
over transition systems, which is very similar to the one for trees (see \cite[Chap.\ 10]{gradel2002automata}).
%We recall the following classic result.
%(see \eg %\cite{wilke2001alternating,gradel2002automata} 
%\cite[Chap.\ 9, Chap.\ 10]{gradel2002automata} for a detailed
%exposition):
\begin{proposition}{\cite[Chap.\ 9, Chap.\ 10]{gradel2002automata}}
\label{prop-equiv-classic}
\hspace{1cm}
\begin{itemize}
\item For every formula $\phi\in\LLmu$, there exists an alternating automaton
$\auto_\phi$ that
accepts precisely the transition systems verifying $\phi$.
\item For every alternating automaton $\auto$, there exists an
  $\LLmu$-formula $\phi_\auto$ whose models are exactly the transition
  systems accepted by $\auto$.
%  $\lang(\auto)=\lang(\phi_\auto)$ over transition systems.
\end{itemize}
Moreover, the translations are effective and linear.
\end{proposition}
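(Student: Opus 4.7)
The plan is to prove both directions by an effective structural construction, with correctness argued via a position-by-position correspondence between the model-checking parity game of the formula and the acceptance parity game of the automaton.

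For the direction $\phi \mapsto \auto_\phi$, I would first rewrite $\phi$ in positive normal form by pushing negations to the atomic propositions, which is a linear syntactic transformation. The states of $\auto_\phi$ are then taken to be the subformulas of $\phi$, with initial state $\phi$. The transition function $\delta(\psi,v)$ is defined by case analysis on the outermost operator of $\psi$: a literal $p$ (resp.\ $\neg p$) evaluates to $\true$ or $\false$ depending on whether $p \in v$; a boolean combination $\psi_1 \vee \psi_2$ or $\psi_1 \wedge \psi_2$ yields the corresponding positive boolean combination of atoms targeting $\psi_1$ and $\psi_2$; a modal subformula $\diam_i \chi$ (resp.\ $[i]\chi$) yields a diamond- (resp.\ box-) atom targeting $\chi$ in direction $i$; and a fixpoint subformula $\sigma X.\chi$ passes control to $\chi$, each bound occurrence of $X$ being interpreted as a pointer back to $\sigma X.\chi$. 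The colouring $\couleur$ assigns an even priority to each $\nu$-bound subformula and an odd priority to each $\mu$-bound subformula, ordered so that outer binders receive lower (more important) priorities than the binders they enclose; non-fixpoint subformulas inherit the priority of their innermost enclosing binder. The resulting automaton has size linear in $|\phi|$.

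For the converse $\auto \mapsto \phi_\auto$, I would introduce one fixpoint variable $X_q$ per state $q \in Q$ and work in the equational presentation of $\LLmu$, writing the system $\{X_q = \Phi_q\}_{q \in Q}$, where $\Phi_q$ is obtained from the family $\{\delta(q,v)\}_{v \in 2^{\APf}}$ by replacing each diamond-atom targeting $q'$ in direction $i$ with $\diam_i X_{q'}$, each box-atom with $[i] X_{q'}$, and by guarding each conjunct with a propositional formula encoding which $v$ labels the current state. The system is then compiled into a standard $\LLmu$-formula by nested fixpoints bound from outermost (lowest priority according to $\couleur$) to innermost (highest priority), alternating $\mu$ and $\nu$ in accordance with the parity of each priority. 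Bekić's theorem ensures that this scalarisation preserves the semantics of the vectorial system, and the resulting formula has size linear in $|\auto|$.

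Correctness in both directions reduces to exhibiting a natural bijection between positions of the model-checking parity game of the formula on $\sys$ and positions of the acceptance parity game of the translated automaton on $\sys$, preserving moves and priorities; the translations being effective is then immediate from the constructions. The main technical delicacy is the priority assignment: one has to verify that the syntactic nesting of fixpoints in one direction, and the scalarisation of the vectorial fixpoint system in the other, give rise to a parity condition that faithfully reflects the alternation between least and greatest fixpoints encoded, respectively, by the syntax or by $\couleur$. Once this correspondence is in place, positional determinacy of parity games immediately yields the equivalence between satisfaction of the formula and acceptance by the automaton.
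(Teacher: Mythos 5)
The paper does not actually prove this proposition: it is imported wholesale from Gr\"adel, Thomas and Wilke's book (Chapters 9--10), so there is no in-paper argument to compare against. Your sketch is, in substance, the standard proof from that reference: subformulas as states with priorities governed by fixpoint alternation (least/odd, greatest/even, outer binders more significant, which matches the min-parity convention used here) for one direction, and a vectorial fixpoint system scalarised via Beki\'c's principle for the other, with correctness established through the correspondence between the model-checking game and the acceptance game. Two caveats. First, you silently invoke the correctness of the model-checking parity game for $\LLmu$ (a state satisfies $\phi$ iff the verifier wins from the corresponding position); that is itself the central theorem of the chapter being re-proved, so it should be named as an ingredient rather than folded into ``positional determinacy immediately yields the equivalence''. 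Second, and more substantively, your claim that Beki\'c scalarisation produces a formula of size \emph{linear} in $|\auto|$ is asserted rather than proved: Gaussian elimination of an $n$-variable equational system duplicates right-hand sides, and the resulting scalar formula can have exponential tree size; linearity is plausible only under the subformula-counting (DAG) measure of size that this paper adopts, with shared occurrences, and even then it needs an argument. Since the paper only relies on effectiveness and polynomial bounds downstream, this does not endanger any later result, but as a proof of the proposition as stated it is the one genuine gap in your write-up.
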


%\begin{remark}
Now we make observation that $\APf$-trees are connected, acyclic,
rooted transition systems with one relation. Also, an
$\APf$-tree $\ltree=(\tree,\lab)$ together with a relation profile
$\{\reli\}_{i\in\agents}$ over $(2^\APf)^*$ can be seen as a
transition system $\sys_\ltree^\setrel=(\tree,\{R\}\union
\{R_i\}_{i\in\agents},\lab)$, where $\noeud R \noeuda$ if $\noeuda$ is
a child of $\noeud$, and $\noeud R_i \noeuda$ if \spchanged{$\noeud
  \reli \noeuda$}. For a relation profile $\setrel$, we define
$\classsys{\APf} \egdef \{\sys_\ltree^\setrel\mid \ltree \mbox{ is an
  $\APf$-tree}\}$, the class of all transition systems obtained by
combining $\setrel$ with $\APf$-trees.  Now, \spchanged{two additional simple
  observations are necessary to prove Theorem~\ref{theo-equiv}:}
%\begin{enumerate}
%\item 
(1) Given a relation
profile $\setrel$, an $\Kmu$-formula on $\APf$-trees can be seen as an $\LLmu$-formula on $\classsys{\APf}$, and 
%\item 
 (2) A jumping automaton equipped with a relation profile
$\setrel$ and working on $\APf$-trees can be seen as an alternating automaton
working on $\classsys{\APf}$.
%\end{enumerate}
% \end{remark}

We now argue for Theorem~\ref{theo-equiv}: For the first point, take a
formula $\phi\in\Kmu$ and a relation profile $\setrel$. See it as an
$\LLmu$-formula over $\classsys{\APf}$. By
Proposition~\ref{prop-equiv-classic}, one can build in linear time an
alternating automaton $\auto_\phi$ that has the same language as
$\phi$ on transition systems, and therefore also when restricted to
$\classsys{\APf}$. This $\auto_\phi$, when restricted to
$\classsys{\APf}$, can be seen as a jumping automaton.  Because
$\auto_\phi$ only depends on $\phi$ and not on $\setrel$, we obtain
the desired
result. % that for every formula $\phi\in\Kmu$, there exists a jumping automaton
% $\auto_\phi$ such that for every relation profile $\setrel$,
% $\lang(\phi,\setrel)=\lang(\auto_\phi,\setrel)$.
The second point of Theorem~\ref{theo-equiv} is just dealt by rolling
back the above argumentation.

Theorem~\ref{theo-equiv} has two important corollaries. First, let
us recall some definitions and results concerning recognizable
relations and jumping  automata. \bmchanged{Let $\Sigma$ be a finite alphabet.}

\begin{definition}
A relation $\rel\;\subseteq \Sigma^*\times\Sigma^*$ is
\emph{recognizable} if there are two families of regular languages
$\langu_1,\ldots,\langu_n\subseteq \Sigma^*$ and
$\langu'_1,\ldots,\langu'_n\subseteq \Sigma^*$ such that
$\rel\;=\bigunion\limits_{i=1}^n \langu_i\times\langu'_i$.
\end{definition}

For example, epistemic relations of agents whose
memory can be represented by finite state machines are recognizable
relations (see \cite{maubertphd}). 

%Recognizable relations can be dealt with by finite-state automata:
Given a recognizable relation $\rel$, one easily shows that the
language $\{w \# w' \mid w\rel w'\}$ where $\#$ is a fresh symbol can
be accepted by a finite-state word automaton;
% given two regular languages $\langu_i$ and $\langu'_i$,
%   the binary relation $\langu_i\times\langu'_i$ can be encoded as the
%   word language $\{w \# w' \mid w \in \langu_i \text{ and } w' \in
%   \langu'_i\}$, where $\#$ is a fresh symbol; such a language is
%   accepted by a finite-state (word) automaton that checks that the first word is in
%   $\langu_i$, then reads $\#$, then checks that the second word is in
%   $\langu_i$. On this basis, a finite-state
%   automaton can be given for an arbitrary recognizable relation $\rel$; 
% Intuitively, a recognizable relation is a relation that can be
% recognized by a finite word automaton that starts by reading a first
% word, then a special symbol, then a second word, and accepts if both
% words are related.  
we let \emph{size} of $\rel$, written
$|\rel|$, is then the number of states of a minimal word automaton
that recognizes the language $\{w \# w' \mid w\rel w'\}$.

\begin{theorem}{\cite{maubertFSTTCS2013,maubertphd}}
  \label{theo-recog-jumping}
  For every jumping automaton $\auto$ equipped with a relation profile
  $\setrel$, if every relation $\reli$ in $\setrel$ is recognizable, then
  there is a two-way tree automaton $\auto_{\setrel}$ that accepts the
  same language, and such that $|\auto_{\setrel}|$ is  polynomial in
  $|\auto| + \sum\limits_{i\in\agents}|\reli|$.
\end{theorem}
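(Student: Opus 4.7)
The plan is to build $\auto_{\setrel}$ as an alternating two-way tree automaton that simulates $\auto$ step by step, delegating to finite-state subroutines whenever a jump direction is encountered. First, for each relation $\reli = \bigcup_{k=1}^{n_i} \langu_{i,k} \times \langu'_{i,k}$, I fix deterministic word automata $B_{i,k}$ and $B'_{i,k}$ recognizing $\langu_{i,k}$ and $\langu'_{i,k}$. The total size of these word automata is $O(\sum_{i}|\reli|)$, since the recognizer of $\{w \# w' \mid w \reli w'\}$ already contains enough information to extract left- and right-projections as regular languages of comparable size.

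For the non-jump directions $\diam$ and $\Box$, $\auto_{\setrel}$ copies the alternating behavior of $\auto$ directly (using only downward moves). For an existential jump direction $\jdiami$ triggered at a node $x$, the simulation branches as follows: Eve existentially picks an index $k$; using the two-way moves, the automaton walks from $x$ up to the root, then back down to $x$ running $B_{i,k}$ on $\word(x)$ and checking acceptance; next, Eve existentially guesses a path from the root descending to some target node $y$, feeding the labels into $B'_{i,k}$ and checking acceptance at $y$; finally, the main simulation of $\auto$ resumes at $y$. Dually, for a universal jump $\jboxi$, the automaton universally branches over every $k$: it tests (by walking up and back) whether $\word(x) \in \langu_{i,k}$, and if so, universally branches over all potential targets $y$ by a downward traversal of the whole tree in which, at each node visited, Eve must either descend further or commit to the current node as $y$ and verify $\word(y) \in \langu'_{i,k}$ using $B'_{i,k}$ before resuming the main simulation.

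For the size analysis, the state space is the disjoint union of the states of $\auto$ with a bounded number of copies of the $B_{i,k}$ and $B'_{i,k}$, plus a few bookkeeping flags to record which subroutine is currently running and whether the walk is in its upward or downward phase. This gives $|\auto_{\setrel}|$ polynomial in $|\auto| + \sum_{i\in\agents}|\reli|$, as required. The colouring function extends the one of $\auto$ by assigning to all auxiliary states a single odd priority larger than every priority used by $\auto$; this ensures that every jump subroutine terminates in finitely many steps on a winning Eve strategy and never dominates the parity condition inherited from $\auto$.

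The main obstacle is the $\jboxi$ case, because a universal quantification over \emph{all} tree nodes $y$ satisfying $\word(y) \in \langu'_{i,k}$ is not a local operation: the naive traversal must not be short-circuited by Adam, and the parity condition must still rule out Adam postponing the commitment indefinitely. The odd auxiliary priority handles exactly this: any infinite branch that never leaves the auxiliary automaton is losing for Eve, so Adam gains nothing by forcing a perpetual traversal, while genuinely infinite branches of the target tree are witnessed by some eventual commitment at every node on the branch, all of whose continuations are verified.
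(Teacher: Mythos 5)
First, note that the paper does not prove this theorem itself: it is imported from the cited works on jumping automata, so your construction can only be compared with the standard one there. Your overall architecture --- simulate $\auto$ directly on the tree directions, and on a jump direction $\jdiami$ or $\jboxi$ detour through the root using word automata for the components $\langu_{i,k},\langu'_{i,k}$ of the recognizable relation, with all auxiliary states coloured by a fresh odd priority larger than every priority of $\auto$ so that subroutines must terminate and cannot perturb the inherited min-parity condition --- is the right one, and your size analysis is essentially correct (extracting the $\langu_{i,k},\langu'_{i,k}$ from the minimal recognizer of $\{w\# w'\mid w\reli w'\}$ gives at most $|\reli|$ pairs of automata with at most $|\reli|$ states each, i.e.\ total size $O(|\reli|^2)$ rather than $O(\sum_i|\reli|)$, but still polynomial).

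There is, however, a genuine gap in the navigation mechanism. A two-way tree automaton that walks from $x$ up to the root has no way to come ``back down to $x$'': it cannot mark nodes, and once at the root it has lost all information about which descendant it came from. So the step ``walks from $x$ up to the root, then back down to $x$ running $B_{i,k}$ on $\word(x)$'' (and likewise ``tests (by walking up and back)'' in the $\jboxi$ case) is not implementable as stated, and this is precisely the subroutine on which everything else rests. The standard repair is to run $B_{i,k}$ \emph{in reverse while ascending}: at $x$ the automaton guesses the state of $B_{i,k}$ reached after reading $\word(x)$ (an accepting one for a positive membership test, a rejecting one for the negative branch of the implication needed in the $\jboxi$ case), at each upward move guesses a predecessor state consistent with the current label, and at the root checks that the guessed run starts in the initial state; determinism of $B_{i,k}$ guarantees that exactly one locally consistent reverse run passes this check, so Eve cannot cheat. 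Only then, from the root, does the automaton descend running $B'_{i,k}$ forward to locate (existentially or universally) the target nodes $y$, as you describe. With this replacement your construction goes through; without it, the jump subroutine does not exist as a two-way automaton.
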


Restricting attention to trees of bounded arity, we obtain
  the following two corollaries:
\begin{corollary}
  The satisfiability problem for epistemic \mucalc with recognizable
  relations is \EXPTIME-complete. % \fbox{dire que la traduction de
%     $\Kmu$ vers automates est effective et linéaire}
\end{corollary}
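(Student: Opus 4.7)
The plan is to establish the upper and lower bounds separately, both following from classical results combined with the machinery already developed in the paper.

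For the \EXPTIME\ upper bound, I would proceed by composition. Given an input consisting of a sentence $\phi \in \Kmu(\APf, \agents)$ and recognizable relations $\setrel = \{\reli\}_{i \in \agents}$ (each $\reli$ presented by a finite word automaton for $\{w \# w' \mid w \reli w'\}$), first apply Theorem~\ref{theo-equiv} to build in linear time a jumping tree automaton $\auto_\phi$ with $\lang(\auto_\phi, \setrel) = \lang(\phi, \setrel)$. Then apply Theorem~\ref{theo-recog-jumping} to obtain a two-way parity tree automaton $\auto_\setrel$ of size polynomial in $|\auto_\phi| + \sum_{i \in \agents} |\reli|$, hence polynomial in the input, and accepting the same language over $k$-ary trees. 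Satisfiability of $\phi$ (over $k$-ary trees, since we restrict to bounded arity) is then equivalent to non-emptiness of $\auto_\setrel$, which can be decided in time exponential in $|\auto_\setrel|$ by Vardi's classical algorithm for two-way alternating parity tree automata. This yields the overall \EXPTIME\ bound.

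For the \EXPTIME\ lower bound, I would invoke the well-known result of Emerson and Jutla that satisfiability for the standard (modal) $\mu$-calculus $\Lmu$ is already \EXPTIME-hard. Since $\Lmu$ is syntactically a fragment of $\Kmu$ and, by the semantics given in Section~\ref{sec-mucalc}, the relation profile plays no role in the interpretation of $\Lmu$-formulas, any $\Lmu$-sentence $\phi$ is satisfiable iff it is satisfiable in $\Kmu$ with any (in particular, trivially recognizable) relation profile. The hardness of $\Kmu$-satisfiability with recognizable relations follows immediately.

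The only genuinely subtle point is ensuring that the bounded-arity restriction and the input encoding of the recognizable relations are consistent with the polynomial size guarantee of Theorem~\ref{theo-recog-jumping} and with Vardi's non-emptiness procedure for two-way tree automata, which assumes a fixed branching degree. Once the arity $k$ is fixed, this is routine, so the main work is bookkeeping rather than any new construction. Everything else is an application of the equivalences and translations already proved, making the corollary a direct consequence of Theorems~\ref{theo-equiv} and~\ref{theo-recog-jumping}.
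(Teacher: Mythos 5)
Your proof is correct and follows essentially the same route as the paper: the upper bound by composing Theorem~\ref{theo-equiv}, Theorem~\ref{theo-recog-jumping}, and Vardi's \EXPTIME\ emptiness test for two-way tree automata over bounded-arity trees, and the lower bound by reduction from satisfiability of the standard $\mu$-calculus, which is a fragment of $\Kmu$ insensitive to the relation profile. The additional remarks on input encoding and fixed arity are sensible bookkeeping that the paper leaves implicit.
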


\begin{proof}
The upper bound follows from Theorem~\ref{theo-equiv} together with
Theorem~\ref{theo-recog-jumping} and the fact that, for trees of
bounded arity, the emptiness
problem for two-way tree automata is \EXPTIME-complete
\cite{DBLP:conf/icalp/Vardi98}. The hardness follows from
EXPTIME-hardness of the satisfiability problem for standard \mucalc.
\end{proof}

\begin{corollary}
  Epistemic \mucalc with recognizable relations \spchanged{is not more expressive than (its fragment) the \mucalc.}
%has the same   expressivity as standard \mucalc.
\end{corollary}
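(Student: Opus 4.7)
The plan is to pipe a given $\Kmu$-formula through the automaton correspondences established above and land in an automaton model that is known to collapse to the standard $\mu$-calculus, namely one-way alternating tree automata.

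Concretely, fix $\phi\in\Kmu$ with a recognizable relation profile $\setrel=\{\reli\}_{i\in\agents}$. First, I would apply the first item of Theorem~\ref{theo-equiv} to obtain a jumping tree automaton $\auto_\phi$ with $\lang(\phi,\setrel)=\lang(\auto_\phi,\setrel)$. Since every $\reli$ is recognizable, Theorem~\ref{theo-recog-jumping} then yields a two-way alternating tree automaton $\auto'$ of size polynomial in $|\auto_\phi|+\sum_i|\reli|$ such that $\lang(\auto')=\lang(\auto_\phi,\setrel)$, which already eliminates the epistemic jumps in favor of pure tree navigation.

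Next I would invoke the classical result of Vardi \cite{DBLP:conf/icalp/Vardi98} that, over trees of bounded arity, every two-way alternating parity tree automaton can be translated into a one-way alternating parity tree automaton accepting the same language. Feeding this one-way automaton into the second item of Proposition~\ref{prop-equiv-classic} (specialized to the single child-relation interpreted on trees, which is exactly the fragment of $\LLmu$ that coincides with $\Lmu$ on trees) produces an $\Lmu$-formula $\psi$ with $\lang(\psi)=\lang(\auto')$. Chaining the equalities gives $\lang(\phi,\setrel)=\lang(\psi)$, which is precisely the claim.

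The main subtlety is the last step: one needs to be careful that the translation from two-way to one-way alternating automata preserves the language on the class of bounded-arity trees under consideration, and that the resulting $\LLmu$-formula provided by Proposition~\ref{prop-equiv-classic}, whose only modal relation after the elimination of jumps is the child relation, indeed belongs to the $\Kmu$-free fragment $\Lmu$ as defined in Section~\ref{sec-mucalc}. Everything else is a routine composition of effective linear and polynomial translations already stated in the paper.
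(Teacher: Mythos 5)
Your proposal is correct and follows essentially the same route as the paper's own proof: formula to jumping automaton via Theorem~\ref{theo-equiv}, then to a two-way automaton via Theorem~\ref{theo-recog-jumping}, then to a one-way alternating automaton via Vardi's result, and finally back to an $\Lmu$-formula via Proposition~\ref{prop-equiv-classic}. The only cosmetic difference is that the paper states the last conversion up front (``it suffices to exhibit an alternating tree automaton'') rather than at the end.
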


\begin{proof}
%   Because it contains standard \mucalc, epistemic \mucalc is clearly
%   at least as expressive as \mucalc. We now show that each epistemic
%   \mucalc formula interpreted with recognizable relations can be
%   translated into a  \mucalc formula. 
  By Propositions~\ref{prop-equiv-classic}, \bmchanged{it suffices to  show that
    for each epistemic \mucalc formula $\phi$ interpreted with
    recognizable relations, there exists an alternating tree automaton
    that accepts the models of $\phi$.} % actually we will get a
%     non-deterministic tree automaton, as a special case.
    Let $\phi\in\Kmu$,
  and let  $\setrel$ be a relation profile of recognizable relations. By
  Theorem~\ref{theo-equiv}, \spchanged{there exists} a jumping automaton
  $\auto_\phi$ such that
  $\lang(\auto_\phi,\setrel)=\lang(\phi,\setrel)$. Then, by
  Theorem~\ref{theo-recog-jumping}, there is a two-way tree automaton
  $\auto_\phi^\setrel$ such that
  $\lang(\auto_\phi,\setrel)=\lang(\auto_\phi^\setrel)$. Finally, by
  \cite{DBLP:conf/icalp/Vardi98}, there is a non-deterministic \bmchanged{(hence
  alternating)} tree
  automaton $\autob_\phi^\setrel$ such that
  $\lang(\autob_\phi^\setrel)=\lang(\auto_\phi^\setrel)$, which concludes.
% And finally,
%   because nondeterministic tree automata are particular cases of
%   alternating tree automata, by Proposition~\ref{prop-equiv-classic}
%   there is a formula $\phi'\in\LLmu$ such that
%   $\lang(\phi')=\lang(\autob_\phi^\setrel)$, \ie
%   $\lang(\phi')=\lang(\phi,\setrel)$.
\end{proof}

% The last corollary of Theorem~\ref{theo-equiv} concerns common
% knowledge. Indeed, one can express the common knowledge of a
% formula $\phi$ in epistemic
% \mucalc, with the following formula:
% \[\CK{\agents}\phi \egdef \nu X.\phi\et
% \biget\limits_{i\in\agents}\Ki X\]

% \begin{proposition}
%   \label{prop-recog-relation}
%   A relation $\rel\;\subseteq \Sigma^*\times\Sigma^*$ is recognizable
%   if, and only if, the language $\{\wa\#\wb\mid \wa\rel\wb\}$
%   is regular, where $\#\notin \Sigma$ is a fresh symbol.
% \end{proposition}

\section{Inexpressivity}
\label{sec-inexp}

\def\runaut{\rho}
\def\lra{\longrightarrow}
\def\AAA{\auto}
\def\BBB{\autob}
\newcommand{\visit}[1][\strat]{\mbox{visit}_{#1}}
\newcommand{\gameo}{\game^0}
\newcommand{\gamei}{\game^{i}}
\newcommand{\gamej}{\game^{j}}
\newcommand{\gameh}[1][h]{\game^{#1}}
\newcommand{\moveo}{\move^0}
\newcommand{\movei}{\move^i}
\newcommand{\movej}{\move^j}
\newcommand{\movek}[1][k]{\move^{#1}}

In this section we prove the non-expressibility of ATL with imperfect information within 
the epistemic \mucalc. We exhibit a formula of \ATLi and a relation
profile that has no equivalent in the epistemic \mucalc evaluated with
the same relation profile. 

Let $\APf=\{p\}$, $\agents=\{a\}$ and
$\setacti[a]=\setact=\{a_0,a_1\}$. We have $\APact=\{p_{a_0},p_{a_1}\}$. Assume that Agent $a$ is synchronous blindfold,
\ie she observes nothing but the occurence of moves. Her
indistinguishability relation 
on $(\APf,\setact)$-tree arenas is therefore $\rel\;\subseteq
(2^{\APf\union \APact})^*$,
defined by $w\rel w' \mbox{ if } |w|=|w'|$.
 Consider the formula $\stratA[a]\F p \in \ATLi(\agents)$. We prove
 that there is no formula of the epistemic \mucalc that is equivalent to $\phi$ with
 regards to the singleton relation profile $\{\rel\}$. More formally:
\begin{theorem}
\label{theo-inexpressive}
For all $\phi'\in\Kmu(\APf\union\APact,\agents)$, $\lang(\phi',\rel)\neq\lang(\stratA[a]\F p ,\setact,\rel)$.
\end{theorem}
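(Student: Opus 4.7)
My plan is to proceed by contradiction, leveraging Theorem~\ref{theo-equiv}: if some $\phi'\in\Kmu(\APf\cup\APact,\agents)$ satisfied $\lang(\phi',\rel)=\lang(\stratA[a]\F p,\setact,\rel)$, then it translates into a JTA $\auto$ with $\lang(\auto,\rel)=\lang(\phi',\rel)$. So it suffices to rule out the existence of such a JTA. For any candidate $\auto$, I would exhibit two $(\APf,\setact)$-tree-arenas $\ltree_1$ and $\ltree_2$ with (i) $\ltree_1\models\stratA[a]\F p$, so that $\auto$ accepts $\ltree_1$; (ii) $\ltree_2\not\models\stratA[a]\F p$, i.e.\ Alice has only non-uniform strategies for $\F p$; and (iii) the acceptance game arenas $\sgame{\ltree_1}$ and $\sgame{\ltree_2}$ are bisimilar in the sense of Definition~\ref{def-game-bisim}. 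By Proposition~\ref{prop-game-bisim} and (iii), $\auto$ accepts $\ltree_1$ iff it accepts $\ltree_2$, contradicting (i) and (ii) jointly.

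Both $\ltree_1$ and $\ltree_2$ would be infinite tree-arenas in which each non-root node has four children, two labeled $p_{a_0}$ and two labeled $p_{a_1}$. The multiplicity of children per action is essential: with a single child per action, a blindfold uniform strategy collapses to an arbitrary path choice and $\stratA[a]\F p$ becomes equivalent to $\exists\F p\in\Lmu$. In $\ltree_1$, the $p$-labels are placed so that some specific uniform $\sigma^*$ (say $\sigma^*\equiv a_0$) forces every outcome from the root to reach $p$---for instance, both $a_0$-children at depth $1$ carry $p$. In $\ltree_2$, the $p$-labels are rearranged so that every uniform $\sigma$ leaves at least one outcome $p$-free, while at every depth the multiset of subtree bisimulation classes coincides with that of $\ltree_1$.

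The game-bisimulation $Z'$ between $\sgame{\ltree_1}$ and $\sgame{\ltree_2}$ is induced from a node-level relation $Z\subseteq\ltree_1\times\ltree_2$ satisfying: (a) $\lab_1(x)=\lab_2(y)$ whenever $(x,y)\in Z$; (b) Zig and Zag for child edges; and (c) Zig and Zag for the equal-length relation $\rel$ (if $x\rel x'$ in $\ltree_1$ and $(x,y)\in Z$, then some $y'$ with $y\rel y'$ has $(x',y')\in Z$, and symmetrically). Lifted to game positions via $((x,q,\alpha),(y,q,\alpha))\in Z'$ whenever $(x,y)\in Z$, Colour Harmony is immediate (same state, hence same colour) and Zig/Zag for every boolean, child and jump move follows directly from (a)--(c), since such moves only depend on the label of the target node and on the child/jump structure at the source.

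The hardest step will be designing $\ltree_2$ together with the bisimulation $Z$, because any tree-isomorphism-like bisimulation would already preserve $\stratA[a]\F p$ (a uniform strategy could simply be transported node by node). The essential asymmetry to exploit is that the JTA's existential choices at same-depth jumps are made independently at each visit, whereas an $\ATLi$ uniform strategy must commit globally to one action per depth across all branches. Concretely, I would obtain $\ltree_2$ from $\ltree_1$ by ``shuffling'' the $p$-labels at each depth in a way that preserves the depth-wise bisimulation-class profile of the nodes but destroys the global coherence of any single $\sigma$; the many-to-many flexibility of $Z$, made possible by the multiplicity of children per action, then yields the required bisimulation while $\stratA[a]\F p$ genuinely distinguishes the two tree-arenas.
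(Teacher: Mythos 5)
Your global setup (reduce to a jumping automaton $\auto$ via Theorem~\ref{theo-equiv}, then exploit game bisimulation and Proposition~\ref{prop-game-bisim}) is the right one, but the pair of trees you are after does not exist, so the core of your plan cannot be carried out. Your item (iii) requires a bisimulation between the two acceptance games \emph{from their initial positions}, and your $Z'$ is induced by a node-level relation $Z$ that relates the two roots, preserves labels, and satisfies Zig/Zag for the child relation. Any such $Z$ already forces $\ltree_1$ and $\ltree_2$ to agree on $\stratA[a]\F p$. Indeed, since Agent $a$ is blindfold, a uniform strategy is nothing but an infinite sequence of actions $\sigma=c_1c_2\ldots$, and its outcomes from the root are exactly the paths whose successive action labels spell out $\sigma$. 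If $\sigma$ wins in $\ltree_1$ and $\pi=y_0y_1\ldots$ is any $\sigma$-outcome in $\ltree_2$, then repeated application of Zag for the child relation yields a path $x_0x_1\ldots$ in $\ltree_1$ with $(x_k,y_k)\in Z$ for all $k$; label preservation makes this path a $\sigma$-outcome of $\ltree_1$, so it hits $p$, so $\pi$ hits $p$. Thus conditions (a) and (b) alone --- before the jump relation is even used --- make $\stratA[a]\F p$ invariant, and no ``shuffling of the $p$-labels preserving the depth-wise class profile'' can separate the two trees. Your own remark that an isomorphism-like bisimulation would transport the strategy identifies the issue but underestimates it: what gets transported is not the strategy as a function on nodes but the action sequence itself, and that transport works for arbitrary many-to-many $Z$.

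The missing idea is precisely to give up on a bisimulation between the two acceptance games at their roots. The paper instead lets the counterexample depend on $\auto$: with $N=|Q|+1$ it builds $2^N$ tree-arenas $\ltree_1,\ldots,\ltree_{2^N}$, each satisfying $\stratA[a]\F p$ via a \emph{different} action sequence $a_0a_0w_ia_0^\omega$, fixes a winning strategy $\strat_i$ for Eve in each acceptance game, and applies a pigeonhole argument to the sets of automaton states with which $\strat_i$ visits a designated node $y_{2^N+1}$, obtaining $i\neq j$ with equal visit sets. The tree $\ltree_0$ is obtained by grafting the $y_{2^N+1}$-subtree of $\ltree_j$ into $\ltree_i$; it falsifies $\stratA[a]\F p$ because its two $p$-free branches now demand incompatible action sequences, and it is \emph{not} bisimilar to any $\ltree_i$ at the root. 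Game bisimulations are used only locally, between the subgames rooted at the positions $v_k^q$ where a play first leaves the top two levels (Lemma~\ref{lem-bisim}), and Eve's winning strategy on $\ltree_0$ is stitched from $\strat_i$ on the top levels and from the winning strategies these local bisimulations provide below, using $\strat_j$'s visit set at $y_{2^N+1}$ (Proposition~\ref{prop-accept0}). The automaton-dependent width $2^N$ and the pigeonhole step are what make this stitching possible; they have no counterpart in your proposal and cannot be replaced by a root-level bisimulation argument.
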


The rest of this section is dedicated to the proof of Theorem~\ref{theo-inexpressive}.

Assume towards a contradiction that there is a formula $\phi'\in \Kmu(\APf\union\APact,\agents)$ such that
$\lang(\phi',\rel)=\lang(\stratA[a]\F p,\setact,\rel)$.
By Theorem~\ref{theo-equiv},  there is a jumping automaton
$\auto$ such that $\lang(\phi',\rel)=\lang(\auto,\rel)$. 
Let  $\auto=(\APf\union\APact,Q,\delta,q_0,\couleur)$, and let $N=|Q|+1$.%  and let $\delta$
% be the transition function of $\auto$.

We  build $2^N$ tree-arenas in which the formula $\stratA[a]\F p$
holds. 
%The difference between these trees is that 
In each of them, the objective $\F p$ is attained with a different
uniform strategy. 
We exhibit, for each tree, a winning strategy in the acceptance game of $\auto$ on that tree,
and then we employ the ``pigeon hole'' principle to show that at least two of these strategies  can be combined
into a new strategy that accepts a new tree-arena,
in which the only strategy for $a$ to ensure $\F p$  is
not uniform.

We describe the family of tree-arenas that we consider (see Figure~\ref{fig:3trees}). Concretely we
only describe finite trees, infinite trees are obtained by adding
loops on leafs and unfolding the obtained graphs.
For each $\bmchanged{i\in\{1,\ldots,2^N\}} $, the tree $\ltreea_i = (\tree_i,\lab_i)$ is such that:
\begin{enumerate}
\item The root does not verify $p$: $\lab_i(\racine)=\emptyset$
\item In $\racine$, Agent $a$ can only play $a_0$. Through this
  action she can move to $2^N+2$ different children. The first $2^N$ ones verify
  $p$, but not the  last two ones. Formally, $\tree_i \cap\setn =\{0,\ldots,2^N+1\}$. For
  readability,  we call $x_{m+1}$ the node $m$ for each $m\in \{0,\ldots,2^N+1\}$ (see Figure~\ref{fig:3trees}). For $1\leq k \leq 2^N$,
  $\lab_i(x_k) = \{p,p_{a_0}\}$, and for $k\in\{2^N+1,2^N+2\}$,
  $\lab_i(x_k)=\{p_{a_0}\}$.
\item For $1\leq k \leq 2^N+2$, node $x_k$ has exactly one child $y_k=x_k\cdot
  0$ reachable
  through $a_0$, where $p$ does not hold: for $1\leq k \leq 2^N+2$, $\lab_i(y_k)=\{p_{a_0}\}$.

\item For each $k\leq 2^N+2$, the subtree $\subt{\ltree_i}{x_k}$ is a full binary tree 
of height $N$ in which each \bmchanged{non-leaf} node $x \desc x_k$ has a left child, accessed through $a_0$,
and a right child, accessed through $a_1$. The valuations are as follows. First,
for the actions:
for $1\leq k \leq 2^N+2$ and $w\in \{0,1\}^{\leq N}$, $p_{a_c}\in\lab_i(y_k\cdot
w)$, where $c$ is the last letter of $w$. Now, for the proposition
$p$.
For each $k\in\{1,\ldots, 2^N\}$, let $w_k \in \{0,1\}^N$ be the binary
representation of $k-1$. For 
$w\in\{0,1\}^{\leq N}$, if $1\leq k\leq 2^N$, then $p\in\lab_i(y_k\cdot w)$
if and only if $w=w_k$, and if
$k\in\{2^N+1,2^N+2\}$, $p\in\lab_i(y_k\cdot w)$ if and only if
$w=w_i$.
% \item 
% \item 

% Then in the subtree $\ltree_i\restr{x_j}$, 
% the node at the end of the path labeled $w_j$ is labeled with $p_1$, and all the other nodes 
% upto level $N$ in $\ltree_i\restr{x_j}$ are labeled $\emptyset$.
% \item Both nodes $\br x_1$  and $\br x_2$ which lie at the end of the path $w_i$ which start in either $x_{2^N+1}$ or $x_{2^N+2}$
% are such that $\lab_i(x) = \{p_1\}$.
% \item All nodes $y \succ x_{2^N+1}$ or $y\succ x_{2^N+2}$, with $y\neq x_1$ and $y\neq x_2$ and $|y| \leq 2N$, 
% carry the label $\lab_i(y)=\emptyset$.
\end{enumerate}

% \begin{remark}
% Before going further, we give a couple of properties of our construction:
% \begin{enumerate}
% \item For all $i,j \leq 2^N$, $\tree_i = \tree_j$ and 
% $\ltree_i\restr{x_{2^N+1}} = \ltree_i\restr{x_i}  = \ltree_i\restr{x_{2^N+2}}$.
% \item For each $z \in \setn^*$, $|z| \leq N$, 
% there are exactly $2^N+2$ other nodes that are identically observable with $x_{2^N+1}z$ (including $x_{2^N+1}z$): these nodes are 
% $x_iz$ for all $i\leq 2^N+2$.
% \end{enumerate}
% \end{remark}

%\item $T_i(1) = a_0\times \{p_1\}, T(2) = a_0 \times \{p_2\}, T(3) = a_0 \times \emptyset$
%and $T(4) = a_1 \times \emptyset$, hence $1 \sim_a 2 \sim_a 3 \not \sim_a 4$.
%\item All successor nodes of $4$ do not bear $p_1$.
%\item For each $j \leq 2^{N+1}$, denote $w_j$ the sequence of symbols from $\{a_0,a_1\}$ which 
%contains as indices the binary representation of $j$.
%Then the subtrees at nodes $1,2,3$ are full binary infinite subtrees in which, at each node $x$,
%$a_0 = \ltree_i(x0)\restr{\Act}$ and $a_1 = \ltree_i(x1)\restr{\Act}$.
%Hence, $a$ has at exactly $2^{N+1}$ sequences of actions available 
%in all these three states. 
%\item In the subtree $\ltree_i\restr{1}$, all nodes upto level $N$ are not labeled with $p_1$,
%and all nodes at level $N+1$ are labeled with $p_1$.
%\item In both subtrees $\ltree_i\restr{2}$ and $\ltree_i\restr{3}$,
%all nodes upto level $N$ are not labeled with $p_1$.
%Moreover, on level $N+1$, the node
%which corresponds with the sequence of actions $w_i$ is labeled with $p_1$,
%and all the other nodes are not labeled with $p_1$.

Observe that for all $i,j\in\{1,\ldots,2^N\}$, $\ltreea_i$ and $\ltreea_j$ share the same
underlying tree, that we shall write $\tree$:
$\tree_i=\tree_j=\tree$. Moreover, the labellings only differ on the leafs
of $\subt{\tree}{y_{2^N+1}}$ and $\subt{\tree}{y_{2^N+2}}$. 
Remark also that, since Agent $a$ observes no atomic proposition, her uniform strategies are
simply (infinite) sequences of actions. 
Also, for each $i$ such that $1\leq i \leq 2^N$,  $\gamei$  denotes $\sgame{\ltreea_i}$, the
acceptance game of $\auto$ on $\ltreea_i$ with relation $\rel$.

\begin{figure}[!]
\begin{center}
\hspace{-40pt}$\ltree_i$

%\resizebox{\textwidth}{!}{
\resizebox{!}{.3 \textheight}{
\begin{tikzpicture}[scale=0.8]
  \tikzstyle{every node}=[draw,circle,minimum size=.5cm]
\scriptsize

\node (root) {};
% \node[red, draw=none] (ti) [above left = 5pt of root] {$t_i$};
% \node[blue, draw=none] (tj) [above right = 5pt of root] {$t_j$};
\node (x2N) [below  = of root] {$p$};
\node[draw=none] (dots1) [left = .7cm of x2N] {$\ldots$};
\node (xk) [left = .7cm of dots1] {$p$};  
\node[draw=none] (dots2) [left = .7cm of xk] {$\ldots$};
\node (x1) [left = .7cm of dots2] {$p$};  
\node (x2N+1) [right = 2.5cm of x2N] {};  
\node (x2N+2) [right = 2cm of x2N+1] {};  
\node (y2N) [below  =  of x2N] {};
\node[draw=none] (dots3) [below = .9cm of dots1] {$\ldots$};
\node (yk) [below = of xk] {};  
\node[draw=none] (dots4) [below = .9cm of dots2] {$\ldots$};
\node (y1) [below = of x1] {};  
\node (y2N+1) [below = of x2N+1] {};  
\node (y2N+2) [below = of x2N+2] {};  

\node[draw=none] [above left = 1pt of x1] {$x_1$};
\node[draw=none] [above left = 1pt of xk] {$x_k$};
\node[draw=none] [above left = 1pt of x2N] {$x_{2^N}$};
\node[draw=none] [above right = 0pt of x2N+1.east] {$x_{2^N+1}$};
\node[draw=none] [above right = 0pt of x2N+2.east] {$x_{2^N+2}$};
\node[draw=none] [above left = 1pt of y1] {$y_1$};
\node[draw=none] [above left = 1pt of yk] {$y_k$};
\node[draw=none] [above left = 1pt of y2N] {$y_{2^N}$};
\node[draw=none] [above right = 1pt of y2N+1.east] {$y_{2^N+1}$};
\node[draw=none] [above right = 1pt of y2N+2.east] {$y_{2^N+2}$};

\path[draw, shorten >= 2pt, shorten <= 2pt, ->] (root) edge[bend right
= 10] node[draw=none,left=7pt] {$a_0$} (x1);
\path[draw, shorten >= 2pt, shorten <= 2pt, ->] (root) edge[bend right
= 10] node[draw=none,left=7pt,pos=.4] {$a_0$} (xk);
\path[draw, shorten >= 2pt, shorten <= 2pt, ->] (root) -- node[draw=none,right,pos=.3] {$a_0$}
(x2N);
\path[draw, shorten >= 2pt, shorten <= 2pt, ->] (root) edge[bend
left=10] node[draw=none,right=7pt,pos=.4] {$a_0$} (x2N+1);
\path[draw, shorten >= 2pt, shorten <= 2pt, ->] (root) edge[bend
left=10] node[draw=none,right=7pt] {$a_0$} (x2N+2);
\path[draw, shorten >= 2pt, shorten <= 2pt, ->] (x1) --
node[draw=none,right] {$a_0$}
(y1);
\path[draw, shorten >= 2pt, shorten <= 2pt, ->] (xk) --
node[draw=none,right] {$a_0$} (yk);
\path[draw, shorten >= 2pt, shorten <= 2pt, ->] (x2N) -- node[draw=none,right] {$a_0$} (y2N);
\path[draw, shorten >= 2pt, shorten <= 2pt, ->] (x2N+1) --
node[draw=none,left] {$a_0$} (y2N+1);
\path[draw, shorten >= 2pt, shorten <= 2pt, ->] (x2N+2) --
node[draw=none,left] {$a_0$} (y2N+2);

\node[draw=none] (z1) [below=2.5cm of y1] {};
\node[] (z1l) [left=.3cm of z1] {$p$};
\node[] (z1r) [right=.3cm of z1] {};

\draw[-] (y1) -- (z1l);
\draw[-] (y1) -- (z1r);

\node[draw=none] (w1) [above=.2cm of z1l] {$w_1$};

\node[] (zk) [below=2.5cm of yk] {$p$};
\node[] (zkl) [left=.3cm of zk] {};
\node[] (zkr) [right=.3cm of zk] {};

\draw[-] (yk) -- (zkl);
\draw[-] (yk) -- (zkr);

\node[draw=none] (p1) [below = .5cm of yk] {};
\node[draw=none] (p2) [below left = .3cm of p1] {};
\node[draw=none] (p3) [below right= .5cm of p2.east] {};
%\node[draw=none] (p4) [below left= .4cm of p3.west] {};

\draw  [-] plot[smooth, tension=.7] coordinates {(yk.south)
  (p1)(p2.east)(p3)(zk.north)};

\node[draw=none] (wk) [above=.2cm of zk] {$w_k$};

\node[draw=none] (z2N) [below=2.5cm of y2N] {};
\node[] (z2Nl) [left=.3cm of z2N] {};
\node[] (z2Nr) [right=.3cm of z2N] {$p$};

\draw[-] (y2N) -- (z2Nl);
\draw[-] (y2N) -- (z2Nr);

\node[draw=none] (w2N) [above=.2cm of z2Nr.north east] {$w_{2^N}$};

\node[draw=none, inner sep=0pt] (z2N+1) [below=2.5cm of y2N+1] {};
\node[red] (z2N+1a) [left=.2cm of z2N+1.east] {$p$};
\node[] (z2N+1l) [left=.3cm of z2N+1] {};
\node[] (z2N+1r) [right=.3cm of z2N+1] {};

\draw[-] (y2N+1) -- (z2N+1l);
\draw[-] (y2N+1) -- (z2N+1r);

\node[draw=none] (p5) [below = .1cm of y2N+1] {};
\node[draw=none] (p6) [below left = .15cm of p5] {};
\node[draw=none] (p7) [below right= .15cm of p6.east] {};
\node[draw=none] (p8) [below left= .15cm of p7] {};
\node[draw=none] (p9) [below right= .25cm of p8.east] {};
%\node[draw=none] (p10) [below left= .25cm of p9] {};

\draw  [-,red] plot[smooth, tension=.7] coordinates {(y2N+1.south)
  (p5)(p6.east)(p7)(p8)(p9)(z2N+1a.north)};

\node[red,draw=none] (wi) [above =.3cm of z2N+1a] {$w_i$};

\node[draw=none, inner sep=0pt] (z2N+2) [below=2.5cm of y2N+2] {};
\node[color=red] (z2N+2a) [left=.2cm of z2N+2.east] {$p$};
\node[] (z2N+2l) [left=.3cm of z2N+2] {};
\node[] (z2N+2r) [right=.3cm of z2N+2] {};

\draw[-] (y2N+2) -- (z2N+2l);
\draw[-] (y2N+2) -- (z2N+2r);

\node[draw=none] (p11) [below = .1cm of y2N+2] {};
\node[draw=none] (p12) [below left = .15cm of p11] {};
\node[draw=none] (p13) [below right= .15cm of p12.east] {};
\node[draw=none] (p14) [below left= .15cm of p13] {};
\node[draw=none] (p15) [below right= .25cm of p14.east] {};
%\node[draw=none] (p16) [below left= .25cm of p15] {};

\draw  [-,red] plot[smooth, tension=.7] coordinates {(y2N+2.south)
  (p11)(p12.east)(p13)(p14)(p15)(z2N+2a.north)};

\node[red,draw=none] (wi2) [above =.3cm of z2N+2a] {$w_i$};

% \node[draw=none] (p1) [below = .5cm of yk] {};
% \node[draw=none] (p2) [below left = .3cm of p1] {};
% \node[draw=none] (p3) [below right= .4cm of p2.east] {};
% \node[draw=none] (p4) [below left= .4cm of p3.west] {};

% \draw  [-] plot[smooth, tension=.7] coordinates {(yk.south)
%   (p1)(p2.east)(p3)(p4)(zk.north)};

% \node[draw=none] (wk) [above=.3cm of zk] {$w_k$};

\end{tikzpicture}
}
\end{center}
%\bigskip
\begin{center}
\hspace{-40pt}$\ltree_j$

%\resizebox{\textwidth}{!}{
\resizebox{!}{.3 \textheight}{
\begin{tikzpicture}[scale=0.8]
  \tikzstyle{every node}=[draw,circle,minimum size=.5cm]
\scriptsize

\node (root) {};
% \node[red, draw=none] (ti) [above left = 5pt of root] {$t_i$};
% \node[blue, draw=none] (tj) [above right = 5pt of root] {$t_j$};
\node (x2N) [below  = of root] {$p$};
\node[draw=none] (dots1) [left = .7cm of x2N] {$\ldots$};
\node (xk) [left = .7cm of dots1] {$p$};  
\node[draw=none] (dots2) [left = .7cm of xk] {$\ldots$};
\node (x1) [left = .7cm of dots2] {$p$};  
\node (x2N+1) [right = 2.5cm of x2N] {};  
\node (x2N+2) [right = 2cm of x2N+1] {};  
\node (y2N) [below  =  of x2N] {};
\node[draw=none] (dots3) [below = .9cm of dots1] {$\ldots$};
\node (yk) [below = of xk] {};  
\node[draw=none] (dots4) [below = .9cm of dots2] {$\ldots$};
\node (y1) [below = of x1] {};  
\node (y2N+1) [below = of x2N+1] {};  
\node (y2N+2) [below = of x2N+2] {};  

\node[draw=none] [above left = 1pt of x1] {$x_1$};
\node[draw=none] [above left = 1pt of xk] {$x_k$};
\node[draw=none] [above left = 1pt of x2N] {$x_{2^N}$};
\node[draw=none] [above right = 0pt of x2N+1.east] {$x_{2^N+1}$};
\node[draw=none] [above right = 0pt of x2N+2.east] {$x_{2^N+2}$};
\node[draw=none] [above left = 1pt of y1] {$y_1$};
\node[draw=none] [above left = 1pt of yk] {$y_k$};
\node[draw=none] [above left = 1pt of y2N] {$y_{2^N}$};
\node[draw=none] [above right = 1pt of y2N+1.east] {$y_{2^N+1}$};
\node[draw=none] [above right = 1pt of y2N+2.east] {$y_{2^N+2}$};

\path[draw, shorten >= 2pt, shorten <= 2pt, ->] (root) edge[bend right
= 10] node[draw=none,left=7pt] {$a_0$} (x1);
\path[draw, shorten >= 2pt, shorten <= 2pt, ->] (root) edge[bend right
= 10] node[draw=none,left=7pt,pos=.4] {$a_0$} (xk);
\path[draw, shorten >= 2pt, shorten <= 2pt, ->] (root) -- node[draw=none,right,pos=.3] {$a_0$}
(x2N);
\path[draw, shorten >= 2pt, shorten <= 2pt, ->] (root) edge[bend
left=10] node[draw=none,right=7pt,pos=.4] {$a_0$} (x2N+1);
\path[draw, shorten >= 2pt, shorten <= 2pt, ->] (root) edge[bend
left=10] node[draw=none,right=7pt] {$a_0$} (x2N+2);
\path[draw, shorten >= 2pt, shorten <= 2pt, ->] (x1) --
node[draw=none,right] {$a_0$}
(y1);
\path[draw, shorten >= 2pt, shorten <= 2pt, ->] (xk) --
node[draw=none,right] {$a_0$} (yk);
\path[draw, shorten >= 2pt, shorten <= 2pt, ->] (x2N) -- node[draw=none,right] {$a_0$} (y2N);
\path[draw, shorten >= 2pt, shorten <= 2pt, ->] (x2N+1) --
node[draw=none,left] {$a_0$} (y2N+1);
\path[draw, shorten >= 2pt, shorten <= 2pt, ->] (x2N+2) --
node[draw=none,left] {$a_0$} (y2N+2);

\node[draw=none] (z1) [below=2.5cm of y1] {};
\node[] (z1l) [left=.3cm of z1] {$p$};
\node[] (z1r) [right=.3cm of z1] {};

\draw[-] (y1) -- (z1l);
\draw[-] (y1) -- (z1r);

\node[draw=none] (w1) [above=.2cm of z1l] {$w_1$};

\node[] (zk) [below=2.5cm of yk] {$p$};
\node[] (zkl) [left=.3cm of zk] {};
\node[] (zkr) [right=.3cm of zk] {};

\draw[-] (yk) -- (zkl);
\draw[-] (yk) -- (zkr);

\node[draw=none] (p1) [below = .5cm of yk] {};
\node[draw=none] (p2) [below left = .3cm of p1] {};
\node[draw=none] (p3) [below right= .5cm of p2.east] {};
%\node[draw=none] (p4) [below left= .4cm of p3.west] {};

\draw  [-] plot[smooth, tension=.7] coordinates {(yk.south)
  (p1)(p2.east)(p3)(zk.north)};

\node[draw=none] (wk) [above=.2cm of zk] {$w_k$};

\node[draw=none] (z2N) [below=2.5cm of y2N] {};
\node[] (z2Nl) [left=.3cm of z2N] {};
\node[] (z2Nr) [right=.3cm of z2N] {$p$};

\draw[-] (y2N) -- (z2Nl);
\draw[-] (y2N) -- (z2Nr);

\node[draw=none] (w2N) [above=.2cm of z2Nr.north east] {$w_{2^N}$};

\node[draw=none, inner sep=0pt] (z2N+1) [below=2.5cm of y2N+1] {};
\node[blue] (z2N+1a) [right=.2cm of z2N+1.west] {$p$};
\node[] (z2N+1l) [left=.3cm of z2N+1] {};
\node[] (z2N+1r) [right=.3cm of z2N+1] {};

\draw[-] (y2N+1) -- (z2N+1l);
\draw[-] (y2N+1) -- (z2N+1r);

\node[draw=none] (p5) [below = .4cm of y2N+1] {};
\node[draw=none] (p6) [below right = .1cm of p5.south west] {};
\node[draw=none] (p7) [below left= .2cm of p6] {};
\node[draw=none] (p8) [below right= .25cm of p7] {};
%\node[draw=none] (p9) [below left= .25cm of p8.east] {};
%\node[draw=none] (p10) [below right= .15cm of p9] {};

\draw  [-,blue] plot[smooth, tension=.7] coordinates {(y2N+1.south)
  (p5)(p6)(p7)(p8)(z2N+1a.north)};

\node[blue,draw=none] (wj) [above =.3cm of z2N+1a] {$w_j$};

\node[draw=none, inner sep=0pt] (z2N+2) [below=2.5cm of y2N+2] {};
\node[blue] (z2N+2a) [right=.2cm of z2N+2.west] {$p$};
\node[] (z2N+2l) [left=.3cm of z2N+2] {};
\node[] (z2N+2r) [right=.3cm of z2N+2] {};

\draw[-] (y2N+2) -- (z2N+2l);
\draw[-] (y2N+2) -- (z2N+2r);

\node[draw=none] (p11) [below = .4cm of y2N+2] {};
\node[draw=none] (p12) [below right = .1cm of p11.south west] {};
\node[draw=none] (p13) [below left= .2cm of p12] {};
\node[draw=none] (p14) [below right= .25cm of p13] {};
% \node[draw=none] (p15) [below left= .25cm of p15.east] {};
% \node[draw=none] (p16) [below right= .15cm of p15] {};

\draw  [-,blue] plot[smooth, tension=.7] coordinates {(y2N+2.south)
  (p11)(p12)(p13)(p14)(z2N+2a.north)};

\node[blue,draw=none] (wj) [above =.3cm of z2N+2a] {$w_j$};

% \node[draw=none] (p1) [below = .5cm of yk] {};
% \node[draw=none] (p2) [below left = .3cm of p1] {};
% \node[draw=none] (p3) [below right= .4cm of p2.east] {};
% \node[draw=none] (p4) [below left= .4cm of p3.west] {};

% \draw  [-] plot[smooth, tension=.7] coordinates {(yk.south)
%   (p1)(p2.east)(p3)(p4)(zk.north)};

% \node[draw=none] (wk) [above=.3cm of zk] {$w_k$};

\end{tikzpicture}
}
\end{center}
%\bigskip
\begin{center}
\hspace{-40pt}$\ltree_0$

%\resizebox{\textwidth}{!}{
\resizebox{!}{.3 \textheight}{
\begin{tikzpicture}[scale=0.8]
  \tikzstyle{every node}=[draw,circle,minimum size=.5cm]
\scriptsize

\node (root) {};
% \node[red, draw=none] (ti) [above left = 5pt of root] {$t_i$};
% \node[blue, draw=none] (tj) [above right = 5pt of root] {$t_j$};
\node (x2N) [below  = of root] {$p$};
\node[draw=none] (dots1) [left = .7cm of x2N] {$\ldots$};
\node (xk) [left = .7cm of dots1] {$p$};  
\node[draw=none] (dots2) [left = .7cm of xk] {$\ldots$};
\node (x1) [left = .7cm of dots2] {$p$};  
\node (x2N+1) [right = 2.5cm of x2N] {};  
\node (x2N+2) [right = 2cm of x2N+1] {};  
\node (y2N) [below  =  of x2N] {};
\node[draw=none] (dots3) [below = .9cm of dots1] {$\ldots$};
\node (yk) [below = of xk] {};  
\node[draw=none] (dots4) [below = .9cm of dots2] {$\ldots$};
\node (y1) [below = of x1] {};  
\node (y2N+1) [below = of x2N+1] {};  
\node (y2N+2) [below = of x2N+2] {};  

\node[draw=none] [above left = 1pt of x1] {$x_1$};
\node[draw=none] [above left = 1pt of xk] {$x_k$};
\node[draw=none] [above left = 1pt of x2N] {$x_{2^N}$};
\node[draw=none] [above right = 0pt of x2N+1.east] {$x_{2^N+1}$};
\node[draw=none] [above right = 0pt of x2N+2.east] {$x_{2^N+2}$};
\node[draw=none] [above left = 1pt of y1] {$y_1$};
\node[draw=none] [above left = 1pt of yk] {$y_k$};
\node[draw=none] [above left = 1pt of y2N] {$y_{2^N}$};
\node[draw=none] [above right = 1pt of y2N+1.east] {$y_{2^N+1}$};
\node[draw=none] [above right = 1pt of y2N+2.east] {$y_{2^N+2}$};

\path[draw, shorten >= 2pt, shorten <= 2pt, ->] (root) edge[bend right
= 10] node[draw=none,left=7pt] {$a_0$} (x1);
\path[draw, shorten >= 2pt, shorten <= 2pt, ->] (root) edge[bend right
= 10] node[draw=none,left=7pt,pos=.4] {$a_0$} (xk);
\path[draw, shorten >= 2pt, shorten <= 2pt, ->] (root) -- node[draw=none,right,pos=.3] {$a_0$}
(x2N);
\path[draw, shorten >= 2pt, shorten <= 2pt, ->] (root) edge[bend
left=10] node[draw=none,right=7pt,pos=.4] {$a_0$} (x2N+1);
\path[draw, shorten >= 2pt, shorten <= 2pt, ->] (root) edge[bend
left=10] node[draw=none,right=7pt] {$a_0$} (x2N+2);
\path[draw, shorten >= 2pt, shorten <= 2pt, ->] (x1) --
node[draw=none,right] {$a_0$}
(y1);
\path[draw, shorten >= 2pt, shorten <= 2pt, ->] (xk) --
node[draw=none,right] {$a_0$} (yk);
\path[draw, shorten >= 2pt, shorten <= 2pt, ->] (x2N) -- node[draw=none,right] {$a_0$} (y2N);
\path[draw, shorten >= 2pt, shorten <= 2pt, ->] (x2N+1) --
node[draw=none,left] {$a_0$} (y2N+1);
\path[draw, shorten >= 2pt, shorten <= 2pt, ->] (x2N+2) --
node[draw=none,left] {$a_0$} (y2N+2);

\node[draw=none] (z1) [below=2.5cm of y1] {};
\node[] (z1l) [left=.3cm of z1] {$p$};
\node[] (z1r) [right=.3cm of z1] {};

\draw[-] (y1) -- (z1l);
\draw[-] (y1) -- (z1r);

\node[draw=none] (w1) [above=.2cm of z1l] {$w_1$};

\node[] (zk) [below=2.5cm of yk] {$p$};
\node[] (zkl) [left=.3cm of zk] {};
\node[] (zkr) [right=.3cm of zk] {};

\draw[-] (yk) -- (zkl);
\draw[-] (yk) -- (zkr);

\node[draw=none] (p1) [below = .5cm of yk] {};
\node[draw=none] (p2) [below left = .3cm of p1] {};
\node[draw=none] (p3) [below right= .5cm of p2.east] {};
%\node[draw=none] (p4) [below left= .4cm of p3.west] {};

\draw  [-] plot[smooth, tension=.7] coordinates {(yk.south)
  (p1)(p2.east)(p3)(zk.north)};

\node[draw=none] (wk) [above=.2cm of zk] {$w_k$};

\node[draw=none] (z2N) [below=2.5cm of y2N] {};
\node[] (z2Nl) [left=.3cm of z2N] {};
\node[] (z2Nr) [right=.3cm of z2N] {$p$};

\draw[-] (y2N) -- (z2Nl);
\draw[-] (y2N) -- (z2Nr);

\node[draw=none] (w2N) [above=.2cm of z2Nr.north east] {$w_{2^N}$};

\node[draw=none, inner sep=0pt] (z2N+1) [below=2.5cm of y2N+1] {};
\node[blue] (z2N+1a) [right=.2cm of z2N+1.west] {$p$};
\node[] (z2N+1l) [left=.3cm of z2N+1] {};
\node[] (z2N+1r) [right=.3cm of z2N+1] {};

\draw[-] (y2N+1) -- (z2N+1l);
\draw[-] (y2N+1) -- (z2N+1r);

\node[draw=none] (p5) [below = .4cm of y2N+1] {};
\node[draw=none] (p6) [below right = .1cm of p5.south west] {};
\node[draw=none] (p7) [below left= .2cm of p6] {};
\node[draw=none] (p8) [below right= .25cm of p7] {};
%\node[draw=none] (p9) [below left= .25cm of p8.east] {};
%\node[draw=none] (p10) [below right= .15cm of p9] {};

\draw  [-,blue] plot[smooth, tension=.7] coordinates {(y2N+1.south)
  (p5)(p6)(p7)(p8)(z2N+1a.north)};

\node[blue,draw=none] (wj) [above =.3cm of z2N+1a] {$w_j$};

\node[draw=none, inner sep=0pt] (z2N+2) [below=2.5cm of y2N+2] {};
\node[color=red] (z2N+2a) [left=.2cm of z2N+2.east] {$p$};
\node[] (z2N+2l) [left=.3cm of z2N+2] {};
\node[] (z2N+2r) [right=.3cm of z2N+2] {};

\draw[-] (y2N+2) -- (z2N+2l);
\draw[-] (y2N+2) -- (z2N+2r);

\node[draw=none] (p11) [below = .1cm of y2N+2] {};
\node[draw=none] (p12) [below left = .15cm of p11] {};
\node[draw=none] (p13) [below right= .15cm of p12.east] {};
\node[draw=none] (p14) [below left= .15cm of p13] {};
\node[draw=none] (p15) [below right= .25cm of p14.east] {};
%\node[draw=none] (p16) [below left= .25cm of p15] {};

\draw  [-,red] plot[smooth, tension=.7] coordinates {(y2N+2.south)
  (p11)(p12.east)(p13)(p14)(p15)(z2N+2a.north)};

\node[red,draw=none] (wi2) [above =.3cm of z2N+2a] {$w_i$};

% \node[draw=none] (p1) [below = .5cm of yk] {};
% \node[draw=none] (p2) [below left = .3cm of p1] {};
% \node[draw=none] (p3) [below right= .4cm of p2.east] {};
% \node[draw=none] (p4) [below left= .4cm of p3.west] {};

% \draw  [-] plot[smooth, tension=.7] coordinates {(yk.south)
%   (p1)(p2.east)(p3)(p4)(zk.north)};

% \node[draw=none] (wk) [above=.3cm of zk] {$w_k$};

\end{tikzpicture}
}
\end{center}
\caption{\label{fig:3trees}
\bmchanged{The tree $\ltree_i$, the tree $\ltree_j$, and the combined
 tree $\ltree_0$}.}
\end{figure}

\begin{lemma}
\label{lem-accepti}
For all $i\in\{1,\ldots,2^N\}$, Eve has a winning strategy in $\gamei$.
\end{lemma}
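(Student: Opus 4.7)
The plan is to reduce the lemma to showing that $\ltreea_i$ is a model of $\stratA[a]\F p$, and then chain the (towards-contradiction) assumption $\lang(\phi',\rel)=\lang(\stratA[a]\F p,\setact,\rel)$ together with the equivalence $\lang(\phi',\rel)=\lang(\auto,\rel)$ given by Theorem~\ref{theo-equiv} applied to $\phi'$.

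First I would exhibit an explicit uniform strategy $\strat_i$ for Agent~$a$ in $\ltreea_i$. Because $a$ is synchronous blindfold, the relation $\rel$ identifies any two nodes at the same depth, so a uniform strategy reduces to an infinite sequence of actions indexed by depth. I take $\strat_i$ to play $a_0$ at depths $0$ and $1$, and at depth $2+k$ to play the bit $w_i[k]$ of the binary word $w_i$, for $0\leq k<N$ (the choice of actions beyond depth $N+1$ is irrelevant).

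Next I would verify that every outcome of $\strat_i$ from $\racine$ meets a node labeled~$p$. Since all children of $\racine$ are reached by action $a_0$, the first step produces a play that reaches some $x_k$ with $k\in\{1,\ldots,2^N+2\}$. If $k\leq 2^N$, then $p\in\lab_i(x_k)$ by construction of $\ltreea_i$ and $\F p$ is already ensured on that branch. Otherwise $k\in\{2^N+1,2^N+2\}$, and the play is determined from $x_k$ onward: playing $a_0$ reaches $y_k$, and then following $w_i$ reaches $y_k\cdot w_i$, which satisfies $p$ by the definition of the labeling on the two rightmost subtrees. Thus every outcome of $\strat_i$ visits a $p$-node, so $\ltreea_i,\rel,\racine\models\stratA[a]\F p$.

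Finally, $\ltreea_i\in\lang(\stratA[a]\F p,\setact,\rel)=\lang(\phi',\rel)=\lang(\auto,\rel)$, where the first equality is the standing contradictory assumption and the second is from Theorem~\ref{theo-equiv}. By definition of acceptance of a tree by a jumping automaton, this means that Eve has a winning strategy in $\sgame{\ltreea_i}=\gamei$. There is no real obstacle in this step: the family $(\ltreea_i)_i$ was tailored so that the blindfold strategy described above witnesses $\stratA[a]\F p$; the heart of the argument will be the subsequent ``pigeon hole''/combination lemma that exploits these $2^N$ winning strategies in the acceptance games.
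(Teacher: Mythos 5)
Your proposal is correct and follows essentially the same route as the paper: exhibit the uniform (blindfold, hence depth-indexed) strategy $a_0\, a_0\, w_i\, a_0^\omega$, conclude $\ltreea_i,\rel,\racine\models\stratA[a]\F p$, and then chain the standing assumption with Theorem~\ref{theo-equiv} to get $\ltreea_i\in\lang(\auto,\rel)$, i.e.\ a winning strategy for Eve in $\gamei$. The only difference is that you spell out the case analysis on which child $x_k$ is reached, which the paper leaves implicit.
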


\begin{proof}
Let $i\in\{1,\ldots,2^N\}$. Agent $a$ has a uniform strategy in $\gamei$ for achieving $\F p$:
it consists in playing $a_0 a_0 w_i a_0^\omega$. Therefore
$\ltreea_i,\rel,\racine\models \stratA[a]\F p$, hence  
$\ltreea_i\in\lang(\auto,\rel)$. This precisely means that Eve has a
winning strategy in $\gamei$.
\end{proof}

%which means that there exists an accepting strategy $\ltree_i$, $\runaut_i = (V_i,E_i,v_0^i)$.
\newcommand{\vkq}[1][k]{v_{#1}^q}

Let us take one winning strategy $\strat_i$ for Eve in each game
$\gamei$. 
For each $1\leq i\leq 2^N$, we define $\visit[\strat_i] : \tree \to 2^Q$, 
 which maps each node of $\tree$ to the 
set of states in which $\strat_i$ visits this node:
$\visit[\strat_i](\noeud) \egdef \{q \mid \bmchanged{\exists\pi\in\out(\strat_i), \exists i \geq 0}, \exists b\in \BBB^+(Dir\times Q)
\text{ s.t. } \bmchanged{\pi[i]= (\noeud,q,b)\}}$.
Consider, for each $1\leq i \leq 2^N$, the set    $\visit[\strat_i](y_{2^N+1})$.
Since there are at most $2^{|Q|}$ different such sets of states, and we have $2^N$ strategies  with $N = |Q|+1$,
there must exist $i \neq j$ s.t. $\visit[\strat_i](y_{2^N+1})=\visit[\strat_j](y_{2^N+1})$.
For the rest of the proof we fix such a pair $(i,j)$.
We now consider the tree-arena $\ltreea_0$ that
consists in $\ltreea_i$ where the subtree
$\subt{\ltreea_i}{y_{2^N+1}}$ is replaced with $\subt{\ltreea_j}{y_{2^N+1}}$ (see
Figure~\ref{fig:3trees}).
Let us write $\gameo$ for $\sgame{\ltreea_0}$.

% We now point out a similarity of
 % that
% allows us to significantly simplify  some notations.
Observe that the three games $\gamei$, $\gamej$ and $\gameo$ share the same set of positions:
$V^0=V^i=V^j=\tree\times Q\times \boolp(\Dir\times Q)=V$. Also, for
all $1\leq k\leq 2^N+2$, $\lab_0(y_k)=\lab_i(y_k)=\lab_j(y_k)$ ($=\{p_{a_0}\}$), that
we now write $\ell$.  Because positions of the form
$(y_k,q,\delta(q,\ell))$ play an important role in the following,
we let $\vkq\egdef (y_k,q,\delta(q,\ell))$.

%We will use the following lemma (see Appendix~\ref{app-lem-bisim} for its proof).
We first establish the following crucial lemma, which allows us to
transfer the existence of winning strategies in positions $\vkq$ from
$\gamei$ and $\gamej$ to $\gameo$ (see Appendix~\ref{app-lem-bisim} for the proof).

\newcounter{lem-bisim}
\setcounter{lem-bisim}{\value{theorem}}
\begin{lemma}
\label{lem-bisim}
\hspace{1cm}
\begin{enumerate}
\item For all $q\in Q$, for $k\neq 2^N+1$,
$\gameo,\vkq\bisim \gamei,
\vkq$, and
\item for all $q\in Q$, for $k\neq 2^N+2$,
$\gameo,\vkq\bisim \gamej,
\vkq$.
\end{enumerate}
\end{lemma}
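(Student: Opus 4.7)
My plan is to construct a bisimulation $Z$ between $\gameo$ and $\gamei$ based on a tree-structural isomorphism. The key observation, which I would establish first, is that $\ltreea_0$ and $\ltreea_i$ agree on labels outside the subtree rooted at $y_{2^N+1}$, while inside this subtree $\ltreea_0$ inherits the $\ltreea_j$-labelling (so $p$ holds exactly at the $w_j$-indexed leaf). Since $j\le 2^N$, the subtree rooted at $y_j$ in $\ltreea_i$ likewise has $p$ at its $w_j$-indexed leaf. Consequently, writing $\subt{\ltreea}{x}$ for the labelled subtree of an arena $\ltreea$ rooted at $x$ and $\cong$ for isomorphism of labelled trees, the critical identity $\subt{\ltreea_0}{y_{2^N+1}\cdot u}\cong\subt{\ltreea_i}{y_j\cdot u}$ holds for every $u\in\{0,1\}^*$.

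I would then take
\[
   Z \;=\; \bigl\{\bigl((x,q,\alpha),(x',q,\alpha)\bigr) \;:\; |x|=|x'|,\ \lab_0(x)=\lab_i(x'),\ \subt{\ltreea_0}{x}\cong\subt{\ltreea_i}{x'}\bigr\}.
\]
Colour harmony is immediate, and the seed $(\vkq,\vkq)$ belongs to $Z$ because for $k\neq 2^N+1$ the subtrees at $y_k$ are literally equal in the two arenas. The $\vee/\wedge$ cases are trivial. For a $\diam$ or $\Box$ move, the subtree isomorphism directly yields, for each child of $x$ in $\ltreea_0$, a child of $x'$ in $\ltreea_i$ with the same label and isomorphic subtree, keeping the successor in $Z$. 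For a $\jdiami$ or $\jboxi$ move (from a pair whose depth is necessarily at least $2$, since all reachable positions from $\vkq$ have depth $\ge 2$), I would verify that for every node $y\in\ltreea_0$ at the current depth there exists $y'\in\ltreea_i$ at the same depth with matching label and isomorphic subtree: if $y$ does not lie in the modified subtree then $y'=y$ works (no node of depth $\ge 2$ is an ancestor of $y_{2^N+1}$, so the two subtrees at $y$ coincide), while if $y=y_{2^N+1}\cdot u$ then $y'=y_j\cdot u$ works by the critical identity. The zag direction is symmetric: jumps in $\gamei$ landing on $y_{2^N+1}\cdot u'$ are matched in $\gameo$ by jumping to $y_{2^N+2}\cdot u'$, using that the unchanged subtree $\subt{\ltreea_0}{y_{2^N+2}}$ has $p$ at the $w_i$-leaf and hence is isomorphic to $\subt{\ltreea_i}{y_{2^N+1}}$.

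The main obstacle, and the only delicate point, is the ``jump entry'' into the modified subtree. A jump from $\vkq$ in $\gameo$ to $(y_{2^N+1},q',\delta(q',\ell))$ must be matched in $\gamei$ not by the identity jump to $y_{2^N+1}$ (whose subtree in $\ltreea_i$ has $p$ at $w_i\neq w_j$, immediately breaking the subtree isomorphism and producing a mismatched $p$-leaf downstream) but by a jump to $y_j$: both $y_{2^N+1}$ and $y_j$ carry the depth-$2$ label $\ell$, so the formula $\delta(q',\ell)$ matches on both sides, and the critical identity then transports every subsequent $\diam/\Box$ move inside the subtree correctly. Part~2 of the lemma is obtained by the symmetric construction: the modified subtree is now rooted at $y_{2^N+2}$, the critical identity becomes $\subt{\ltreea_0}{y_{2^N+2}\cdot u}\cong\subt{\ltreea_j}{y_i\cdot u}$, and $\gamei$, $y_j$ are replaced throughout by $\gamej$, $y_i$.
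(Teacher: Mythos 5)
Your proof is correct and follows essentially the same route as the paper's: an explicit bisimulation that acts as the identity on positions outside the modified subtree and matches $y_{2^N+1}\cdot u$ in $\ltree_0$ with $y_j\cdot u$ in $\ltree_i$ (the paper's second generating clause for $Z$), with the Zag direction handled by re-routing jumps of $\gamei$ into $\subt{\tree}{y_{2^N+1}}$ to an unmodified copy whose $p$-leaf sits at $w_i$. The only (immaterial) difference is that for this last step you use $y_{2^N+2}$ where the paper uses $y_i$; both subtrees carry $p$ exactly at the $w_i$-indexed leaf, so either witness works.
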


Observe that, in $\ltreea_0$, Agent $a$ has a non-uniform strategy to
achieve $\F p$, but no uniform one. Therefore,
$\ltreea_0,\rel,\racine\not\models \stratA[a]\F p$, and thence
$\ltreea_0\notin\lang(\auto,\rel)$. \bmchanged{By definition of the acceptance
for jumping automata, Eve does not have a winning strategy in $\gameo$.}
We prove the following proposition and obtain a contradiction, which
terminates the proof of Theorem~\ref{theo-inexpressive}.

\newcounter{prop-accept0}
\setcounter{prop-accept0}{\value{theorem}}
\begin{proposition}
\label{prop-accept0}
Eve has a winning strategy in $\gameo$.
\end{proposition}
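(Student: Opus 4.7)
The plan is to construct a winning strategy $\strat_0$ for Eve in $\gameo$ by \emph{splicing} the winning strategies $\strat_i$ in $\gamei$ and $\strat_j$ in $\gamej$ granted by Lemma~\ref{lem-accepti}, leveraging the pigeon-hole identity $\visit[\strat_i](y_{2^N+1})=\visit[\strat_j](y_{2^N+1})$ that was just established.

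First, I would observe that the labelings $\lab_0,\lab_i,\lab_j$ coincide on every node of depth at most $2$. As a consequence, the three acceptance games $\gameo,\gamei,\gamej$ share the same \emph{initial subgame}, consisting of the positions $(x,q,\alpha)$ with $x$ at depth at most $2$: the initial position $v_0$ and every exit position $\vkq$ belong to it, and the available moves among these positions are identical in all three games. Moreover, since the jump relation $\rel$ only relates nodes of equal depth, once a play leaves this initial subgame via a descendant move it can never return. Consequently, the restriction of $\strat_i$ to this common initial subgame is already a valid partial strategy in $\gameo$, and the set of exit positions $\vkq$ it may cause to be reached is exactly $\{\vkq \mid 1 \leq k \leq 2^N+2,\ q\in\visit[\strat_i](y_k)\}$.

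Next, I would show that every such exit $\vkq$ is won by Eve in $\gameo$ by splitting on $k$. If $k\neq 2^N+1$, then $\vkq$ is visited by an outcome of the winning strategy $\strat_i$ and is therefore winning for Eve in $\gamei$; Lemma~\ref{lem-bisim}(1) combined with Proposition~\ref{prop-game-bisim} then yields a winning strategy for Eve from $\vkq$ in $\gameo$. If $k=2^N+1$, the pigeon-hole identity gives $q\in\visit[\strat_j](y_{2^N+1})$, so $\vkq[2^N+1]$ is winning for Eve in $\gamej$, and Lemma~\ref{lem-bisim}(2) combined with Proposition~\ref{prop-game-bisim} again transfers a winning strategy to $\gameo$.

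Finally, gluing the partial strategy on the shared initial subgame with these case-specific winning continuations from each reachable exit produces the desired winning strategy $\strat_0$ for Eve in $\gameo$. The main subtleties are (a) the verification that the three initial subgames genuinely coincide, which reduces to direct inspection of the three labelings at depths $0$, $1$ and $2$, and (b) the transfer of actual winning \emph{strategies}, not merely winning status, from $\vkq$ in $\gamei$ or $\gamej$ to $\vkq$ in $\gameo$; the latter is exactly what the classical proof of Proposition~\ref{prop-game-bisim} provides, by ``copying'' Eve's strategy along the bisimulation.
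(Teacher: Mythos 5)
Your proposal is correct and follows essentially the same route as the paper: let Eve play $\strat_i$ on the common initial part of the game, and at each exit position $\vkq$ transfer a winning strategy from $\gamei$ (for $k\neq 2^N+1$) or from $\gamej$ (for $k=2^N+1$, via the pigeon-hole identity) using Lemma~\ref{lem-bisim} together with Proposition~\ref{prop-game-bisim}. The only point you leave implicit is the case of plays that remain forever in the initial subgame by repeatedly jumping without descending: such a play is an outcome of $\strat_i$ in $\gamei$, hence winning, and since positions carry the same colours in all three games it is also winning in $\gameo$.
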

\newcommand{\Startt}{\mbox{Start}_\tree}
\newcommand{\StartG}{\mbox{Start}_{\game}}
\begin{proofsketch}
% In order to prove Proposition~\ref{prop-accept0}, we first analyze the
% games $\gameo=(V^0,E^0,C^0,v_0^0)$, $\gamei=(V^i,E^i,C^i,v_0^i)$ and
% $\gamej=(V^j,E^j,C^j,v_0^j)$: \spchanged{we establish
%   Lemma~\ref{lem-bisim}, which, together with
%   Proposition~\ref{prop-game-bisim} will provide argument for the
%   existence of a winning strategy for Eve in $\gameo$. }

We give an intuition on how a winning strategy $\strat_0$ for Eve in
  $\gameo$ can be obtained. The detail can be found in Appendix~\ref{app-strat-G0}.
  Let us define $\Startt=\{\racine,x_1,\ldots,x_{2^N+2}\}$, the two
first levels of $\tree$, and $\StartG=\{(x,q,\alpha)\in V\mid x\in
\Startt\}$. Observe that every play in $\gameo$ starts in $\StartG$,
namely, in
$v_0=(\epsilon,q_0,\delta(q_0,\lab_0(\epsilon)))$, and
  may remain in $\StartG$ for an arbitrarily long time if it keeps
  jumping without going down. Otherwise, it exits $\StartG$
  by reaching some node $y_k$, in  position $\vkq$ for some $q$. Observe also that from any position of
$\StartG$, the set of moves available in $\gameo$ and in $\gamei$ \bmchanged{(and
 in $\gamej$)} are
the same. In $\gameo$, we let Eve follow $\strat_i$ as long as
the game is in $\StartG$. If the game remains in $\StartG$ for ever,
the obtained play is an outcome of $\strati$ which is   winning for
Eve in $\gamei$. Because \bmchanged{a position has} the same colour in all
games, this play is also winning for Eve in $\gameo$. Otherwise, the
game  reaches  a position of the
  form $\vkq$. If
$k\neq 2^N+1$, 
%$\gameo,(y_k,q,\delta(q,\lab_0(y_k)))\bisim
%\gamei,(y_k,q,\delta(q,\lab_i(y_k)))$ and
because $\vkq$   has been reached  by following $\strat_i$ which is winning
in $\gamei$, $\vkq$ is a winning position for Eve in
$\gamei$. By Point~1 of Lemma~\ref{lem-bisim}, $\gameo,\vkq\bisim
\gamei,\vkq$, and
by Proposition~\ref{prop-game-bisim} \bmchanged{we obtain} that Eve also has a winning strategy  from $\vkq$ in
$\gameo$. 
If $k=2^N+1$, because
$\visit[\strat_i](y_{2^N+1})=\visit[\strat_j](y_{2^N+1})$, $\strat_j$
also visits position $\vkq[2^N+1]$, and therefore
it is a winning position for Eve in
$\gamej$. Again, by  Point 2 of
Lemma~\ref{lem-bisim}, $\gameo,\vkq\bisim\gamej,\vkq$, and by
Proposition~\ref{prop-game-bisim} Eve also has a winning strategy from
$\vkq$ in $\gameo$.
\end{proofsketch}

\section{Conclusions}
\label{sec-conclusion}

We have investigated in the expressive power of the epistemic \mucalc
by comparing it with jumping automata and $\ATLi$.  For the first
comparison, we have shown that, like in the classic case, $\Kmu$ is
expressively equivalent to alternating jumping tree automata.  Next,
we have shown that $\ATLi$ may express properties not expressible in
$\Kmu$, when interpreted with  synchronous  perfect-recall semantics. We have also shown that $\Kmu$ has a decidable
satisfiability problem when the semantics relies on recognizable
 relations, \ie bounded-memory semantics.

From the first two results above, one may prove that the monadic second order logic on trees,
enriched with the equal-level predicate ($\text{MSO}_{eqlevel}$) \cite{thomas-msoeqlevel}, is strictly more expressive
than $\Kmu$: on the one hand, for each jumping automaton, 
one may build an equivalent $\text{MSO}_{eqlevel}$ formula, by appropriately 
encoding Eve's winning strategies in the automaton. 
On the other hand, it is not hard to see that $\text{MSO}_{eqlevel}$ may encode any $\ATLi$ formula.  
These results strengthen the common belief that there exists no ``fixpoint'' axiomatization of $\ATLi$,
contrary to what is known for ATL with perfect information, where the
coalition  operators have 
fixpoint expansions.

We plan to further investigate the impact of these results on a theory
of jumping automata and their relation with MSO with the equal-level
predicate, or other binary predicates.  We conjecture that languages of jumping automata are not
closed under existential quantifications.  We also plan to identify a
generalization of jumping automata which would be expressively equivalent
(modulo bisimulations) to MSO with additional predicates.  On the
other hand, our non-expressiveness proof relies on the synchronous perfect recall setting, and we do not have
an easy generalization to the case of non-synchronous perfect recall
semantics, or to other types of semantics based on non-recognizable
indistinguishability relations.

\bibliography{biblio,biblio-suppl}

\newpage

%\end{document}
\appendix
\section{Proof of Lemma~\ref{lem-bisim}}
\label{app-lem-bisim}
%\begin{proof}

\setcounter{theorem}{\value{lem-bisim}}
\begin{lemma}
%\label{lem-bisim}
\hspace{1cm}
\begin{enumerate}
\item For all $q\in Q$, for $k\neq 2^N+1$,
$\gameo,\vkq\bisim \gamei,
\vkq$, and
\item for all $q\in Q$, for $k\neq 2^N+2$,
$\gameo,\vkq\bisim \gamej,
\vkq$.
\end{enumerate}
\end{lemma}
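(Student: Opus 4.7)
The plan is to exhibit, for each part, an explicit bisimulation between the relevant pair of games. The two parts of the lemma are structurally symmetric: swapping the roles of $i$ and $j$ and the subtrees $\subt{\tree}{y_{2^N+1}}$ and $\subt{\tree}{y_{2^N+2}}$ turns part~1 into part~2, so I focus on part~1 and construct, for each $q \in Q$ and $k \neq 2^N+1$, a bisimulation $Z \subseteq V \times V$ between $\gameo$ and $\gamei$ containing the pair $(\vkq, \vkq)$.

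The starting observation is a structural comparison of $\gameo$ and $\gamei$. Both games live on the same position set $V = \tree \times Q \times \boolp(\Dir\times Q)$, share the coloring $\couleur'(x,q,\alpha) = \couleur(q)$, share the ownership partition between Eve and Adam (which depends only on $\alpha$), and share the jump relation $\rel$ (which depends only on node depth). Their move relations therefore coincide except at modal or jump moves $(x,q,[\dag,q']) \to (y,q',\delta(q',\lab(y)))$ whose target $y$ satisfies $\lab_0(y)\neq \lab_i(y)$. By construction of $\ltreea_0$ from $\ltreea_i$, such ``bad'' nodes are precisely the two leaves $z_i \egdef y_{2^N+1}\cdot w_i$ and $z_j \egdef y_{2^N+1}\cdot w_j$, together with the infinite self-loop chains descending from them in the unfolded tree.

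I would take $Z$ to be the identity on all positions whose first component is not a descendant of $z_i$ or $z_j$, and extend $Z$ along the two bad chains by pairing, for every chain node $z$ and every state $q'$, the fresh-arrival position $(z,q',\delta(q',\lab_0(z)))$ in $\gameo$ with $(z,q',\delta(q',\lab_i(z)))$ in $\gamei$, then further extending $Z$ onto the sub-formulas reached from these fresh positions by boolean moves so that the zig/zag conditions are satisfied. Colour harmony is immediate, and zig/zag on the identity piece is trivial because every move to a ``good'' target lands at the same position in both games. Moves from the identity piece into a chain exist identically in the two games, since the child structure and the jump relation are shared, and the resulting arrival pair lies in $Z$ by construction.

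The hard part will be verifying zig/zag on the chain region. Each chain is, after self-loop unfolding, an infinite repetition of a node whose unique self-loop child carries the same label, so the same fresh transition formula is re-instantiated at every chain level, and $\gameo$ and $\gamei$ unroll in lockstep on a chain, differing only in which of two possible formulas appears at each level. Modal moves along a chain are matched by moves to the next chain-level paired fresh position; jumps from a chain to a ``good'' node at the same depth return both games to the identity piece and are matched directly; and jumps between the two chains (both appear at every depth $\geq N+2$) are matched symmetrically by the chain clause of $Z$. The delicate point is the matching of sub-formulas reached by boolean moves at a paired fresh position when $\delta(q',\lab_0(z))$ and $\delta(q',\lab_i(z))$ have syntactically unrelated structure; here the freedom of bisimulation (matchings need not be structural) together with the self-similarity of the chain lets the matching be built inductively, after which a case analysis on move types closes the proof.
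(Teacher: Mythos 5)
Your construction of $Z$ on the two ``bad'' chains is where the argument breaks down. You pair the position $(z,q',\delta(q',\lab_0(z)))$ of $\gameo$ with $(z,q',\delta(q',\lab_i(z)))$ of $\gamei$, i.e.\ the \emph{same} node $z$ carrying \emph{different} labels in the two trees. But $\delta(q',\lab_0(z))$ and $\delta(q',\lab_i(z))$ are arbitrary, unrelated positive Boolean formulas, and the lemma must hold for an arbitrary automaton $\auto$ (the one produced by Theorem~\ref{theo-equiv} from the hypothetical formula $\phi'$). Concretely, if $\auto$ has a state $q_p$ with $\delta(q_p,\ell)=\true$ when $p\in\ell$ and $\false$ otherwise, then at $z_j=y_{2^N+1}\cdot w_j$ you would be pairing an immediate Eve-win deadlock in $\gameo$ (since $p\in\lab_0(z_j)$) with an immediate Adam-win deadlock in $\gamei$ (since $p\notin\lab_i(z_j)$). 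No relation containing such a pair can support the transfer of winning positions via Proposition~\ref{prop-game-bisim} (and for less degenerate $\delta$ the two formulas may even be owned by different players, which a game bisimulation must also respect). So the ``delicate point'' you defer---matching subformulas of two syntactically unrelated transition formulas---is not delicate but impossible in general; the strategy of relating same-node, different-label positions cannot be repaired by any inductive matching.

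The missing idea is to relate only positions whose nodes carry \emph{identical} labels, exploiting the blindness of the agent: since $\rel$ relates all nodes of equal depth, a jump or descent into the modified subtree in one game can be mimicked in the other game by the corresponding move into a \emph{different}, identically labelled subtree at the same depth. This is what the paper does: its $Z$ is the identity on the subtrees $\subt{\tree}{y_k}$ for $k\neq 2^N+1$, relates $(y_{2^N+1}\cdot w,q,\alpha)$ in $\gameo$ with $(y_j\cdot w,q,\alpha)$ in $\gamei$ (the subtree of $\ltreea_0$ at $y_{2^N+1}$ is by construction a copy of the one at $y_j$, so the labels, hence the transition formulas, coincide), and adds the symmetric clause relating $(y_i\cdot w,q,\alpha)$ in $\gameo$ with $(y_{2^N+1}\cdot w,q,\alpha)$ in $\gamei$, which is what makes Zag go through when $\gamei$ moves into its own version of $\subt{\tree}{y_{2^N+1}}$. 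All related positions then share the same state and the same formula, so colour harmony, ownership, and the matching of Boolean, modal and jump moves are immediate. Your identity piece and your reduction of part~2 to part~1 by symmetry are fine; only the treatment of the modified subtree needs to be replaced along these lines.
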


\begin{proof}
For convenience, for $v,v'\in V$ and $k\in\{0,i,j\}$, we shall write
$v\movek v'$ if $(v,v')\in E^k$.  

We start with point 1.  Let us define the binary relation $Z\subseteq
V^0\times V^i$ as the smallest relation such that, for all $q\in Q$
and all $\alpha\in\boolp(\Dir\times Q)$:
\begin{itemize}
  \item $\forall k\neq 2^N+1$, $\forall x\in \subt{\tree}{y_k}$,
    $(x,q,\alpha) Z (x,q,\alpha)$,
  \item $\forall w\in\{0,1\}^*$, 
    $(y_{2^N+1}\cdot w,q,\alpha) Z (y_j\cdot w,q,\alpha)$, and
  \item $\forall w\in\{0,1\}^*$, 
    $(y_i\cdot w,q,\alpha) Z (y_{2^N+1}\cdot w,q,\alpha)$.
\end{itemize}

We prove that $Z$ is a bisimulation between $\gameo$ and
$\gamei$. Take $(v,v')\in Z$. By definition of $Z$, $v$ and $v'$ are
on the horizontal line of $y_k$ or below. Also, there are $x,x',q$
and $\alpha$ such that 
$v=(x,q,\alpha)$ and $v'=(x',q,\alpha)$.

First, for colour harmony: by definition of the colours in acceptance
games, it holds that $C(v)=C(q)=C(v')$. 

Now, for Zig, take  $u\in V$ such that $v \moveo u$. According to the
possible moves in the semantic games (see Section~\ref{sec-jumping}),
this move is of one of the three following kinds: 
\begin{enumerate}
\item it decomposes $\alpha$
without moving in the tree nor changing state, 
\item it goes down to a child of $x$ in a state $q'$, or
\item it jumps to a node $y$ such that $x\rel y$ in a state $q'$.
\end{enumerate}

{\bf Case 1}: We have $u=(x,q,\beta)$, where $\beta$ is
some subformula of $\alpha$. According to the definition of semantic
games, this move is also possible in $\gamei$: $v'\movei
u$. Therefore, we let
$u'=u$.
Because we have $(x,q,\alpha')Z(x',q,\alpha')$ for some
$\alpha'=\alpha$, by definition of $Z$, it is true for all $\alpha'$,
and in particular $(x,q,\beta) Z (x',q,\beta)$. Finally, $u Z u'$.

{\bf Case 2}: We have $\alpha=\Diamond q'$ or $\alpha=\Box q'$,
$u=(y,q',\delta(q',\lab_0(y)))$ for some child $y$ of $x$; write
$\beta\egdef \delta(q',\lab_0(y))$ and $y\egdef x\cdot c$, where $c\in\{0,1\}$.

First, observe that by definition of $Z$, $x$ and $x'$ are at the same
level ($|x|=|x'|$), and therefore if $x\cdot c$ exists in $\tree$, so does $x'\cdot
c$.  It follows, by definition of semantic games, that $v' \movei (x'\cdot c,
  q',\delta(q',\lab_i(x'\cdot c)))$ is a legal move in $\gamei$; write 
  $\beta'\egdef\delta(q',\lab_i(x'\cdot c))$ and $u'\egdef (x'\cdot c,
  q',\beta')$.

We distinguish three possibilities again, according to the definition
of $Z$ and the fact that $(x,q,\alpha) Z (x',q,\alpha)$.
\begin{itemize}
  \item  $x=x'$. We have $y=x\cdot c= x'\cdot c$. By definition of
    $Z$, we obtain that  $y\notin\subt{\tree}{y_{2^N+1}}$, so that
    $\lab_0(y)=\lab_i(y)$.
    Therefore $\beta=\beta'$, and $u=u'$, which, by definition of $Z$,
    entails that $u Z u'$.
  \item $x=y_{2^N+1}\cdot w$ for some $w$.  Because $v Z v'$,
    we have $x'=y_j\cdot w$. By observing $\ltreea_0$ and $\ltreea_i$,
    we obtain that $\lab_0(y_{2^N+1}\cdot w\cdot c)=\lab_i(y_j\cdot
    w\cdot c)$, so $\beta=\beta'$, and again, by definition of $Z$, $u
    Z u'$.
  \item  $x=y_i\cdot w$ for some $w$. Because $v Z v'$, we have  that $x'=y_{2^N+1}\cdot w$. Again, it holds that
    $\lab_0(y_i\cdot w\cdot c)=\lab_i(y_{2^N+1}\cdot w\cdot c)$,
    therefore $\beta=\beta'$, and by definition of $Z$, $u Z u'$. 
\end{itemize}

{\bf Case 3}: We have $\alpha=\jdiam q'$ or $\alpha=\jbox q'$ for some
$q'$, $u=(y,q',\beta)$ for some $x\rel y$ and
$\beta=\delta(q',\lab_0(y))$. By definition of
$Z$, $|x|=|x'|$, and because Agent $a$ is blind, the nodes
  reachable from $x$ and $x'$ through $\rel$ coincide (they  are all
  the nodes at
  the same level). We therefore hace $|x|=|x'|=|y|$. We distinguish two cases.
\begin{itemize}
\item $y\in\subt{\tree}{y_k}$ for some $k\neq 2^N+1$: since
    $|x'|= |y|$, we have that $x'\rel y$, and therefore the move $v'\movei (y,q',\delta(q',\lab_i(y)))=u'$ is
    legal in $\gamei$. Now, because $\lab_0(y)=\lab_i(y)$, $u=u'$, hence
    $u Z u'$.
\item $y\in \subt{\tree}{y_{2^N+1}}$: let $y=y_{2^N+1}\cdot w$ for
  some $w$. We have that $|y_j\cdot w|= |y_{2^N+1}\cdot w|=|y|=|x'|$,
  hence $x'\rel y_j\cdot w$, and therefore  $v \movei (y_j\cdot w, q',
  \delta(q',\lab_i(y_j\cdot w)))=u'$ is a valid move in $\gamei$. And because
  $\lab_0(y_{2^N+1}\cdot w)=\lab_i(y_j\cdot w)$,
  $\delta(q',\lab_i(y_j\cdot w))=\beta$, and therefore $u Z u'$.
\end{itemize}

For Zag, the proof is almost the same, making use of the third point in the
the definition of $Z$ instead of the second one for simulating the
moves of $\gamei$
that jump in $\subt{\tree}{y_{2^N+1}}$. So $Z$ is a bisimulation
between $\gameo$ and $\gamei$ and, clearly, %it is not hard to see that 
for all $q\in Q$, for $k\neq 2^N+1$,
$(y_k,q,\delta(q,\lab(y_k))) Z 
(y_k,q,\delta(q,\lab(y_k)))$, \ie $\vkq Z \vkq$, so that $\gameo, \vkq
\bisim \gamei, \vkq$. % (indeed, $\lab_0(y_k)=\lab_i(y_k)$).

We turn to the proof of the  second point in Lemma~\ref{lem-bisim}.

We define the following binary relation $Z'\subseteq
V^0\times V^j$, very similar to $Z$, as the smallest relation such that, for all $q\in Q$
and all $\alpha\in\boolp(\Dir\times Q)$:
\begin{itemize}
  \item $\forall k\neq 2^N+1$, $\forall x\in \subt{\tree}{y_k}$,
    $(x,q,\alpha) Z' (x,q,\alpha)$,
  \item $\forall w\in\{0,1\}^*$, 
    $(y_{2^N+2}\cdot w,q,\alpha) Z' (y_i\cdot w,q,\alpha)$, and
  \item $\forall w\in\{0,1\}^*$, 
    $(y_j\cdot w,q,\alpha) Z' (y_{2^N+2}\cdot w,q,\alpha)$.
\end{itemize}

The only difference is that now, the moves that must be avoided are
those that jump in $\subt{\tree}{y_{2^N+2}}$, which is the part that
differs between $\ltreea_0$ and $\ltreea_j$. The rest of the proof is just
the same as for the first point.
\end{proof}

\section{Proof of Proposition~\ref{prop-accept0}}
\label{app-strat-G0}

\setcounter{theorem}{\value{prop-accept0}}
\begin{proposition}
%\label{prop-accept0}
Eve has a winning strategy in $\gameo$.
\end{proposition}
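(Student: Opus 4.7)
The plan is to build a winning strategy $\strat_0$ for Eve in $\gameo$ by splicing: follow $\strat_i$ as long as the play stays inside $\StartG$, and after the play exits $\StartG$ switch to a winning strategy supplied by the bisimulation lemma. The guiding observation, which I would establish first, is that from any position $(x,q,\alpha)\in\StartG$ the set of legal $\gameo$-moves coincides with the set of legal $\gamei$-moves (and with those of $\gamej$). This rests on two facts: the labellings $\lab_0$, $\lab_i$ and $\lab_j$ agree on $\Startt$ (the trees differ only inside $\subt{\tree}{y_{2^N+1}}$ and $\subt{\tree}{y_{2^N+2}}$), so $\delta(q,\lab_0(x))=\delta(q,\lab_i(x))$, and Agent $a$'s indistinguishability $\rel$ is length-based, so an $\reli$-jump from $\epsilon$ only reaches $\epsilon$ and an $\reli$-jump from any $x_k$ only reaches $\{x_1,\dots,x_{2^N+2}\}\subseteq\Startt$. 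In particular the play cannot escape $\StartG$ except by a $\diam$- or $\Box$-move from some $x_k$, and such a move necessarily lands in a position of the form $\vkq=(y_k,q,\delta(q,\lab_0(y_k)))$.

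With this observation in hand, I would define $\strat_0$ in two phases. Phase one: while the current position lies in $\StartG$, Eve mirrors $\strat_i$, which is legitimate because the two games offer the same options at such positions. Phase two: at the first moment the play leaves $\StartG$, it enters some $\vkq$; from there Eve uses a winning $\gameo$-strategy from $\vkq$, whose existence I verify next by a case distinction on $k$.

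Case $k\neq 2^N+1$: since the prefix of the play was jointly legal in $\gameo$ and $\gamei$ and produced by $\strat_i$, the position $\vkq$ is an outcome of the winning strategy $\strat_i$ in $\gamei$, hence winning for Eve in $\gamei$. Lemma~\ref{lem-bisim}(1) yields $\gameo,\vkq\bisim\gamei,\vkq$, and Proposition~\ref{prop-game-bisim} transports the winning status back to $\gameo$. Case $k=2^N+1$: here the prefix gives $q\in\visit[\strat_i](y_{2^N+1})$, and by the pigeonhole choice of $(i,j)$ we have $\visit[\strat_i](y_{2^N+1})=\visit[\strat_j](y_{2^N+1})$, so $q\in\visit[\strat_j](y_{2^N+1})$ and $\vkq[2^N+1]$ is an outcome of $\strat_j$ in $\gamej$, hence winning in $\gamej$. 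Lemma~\ref{lem-bisim}(2) and Proposition~\ref{prop-game-bisim} then give a winning strategy from $\vkq[2^N+1]$ in $\gameo$.

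It remains to handle plays that never leave $\StartG$. Such a play, viewed as a sequence of positions, is simultaneously a legal play of $\gamei$ (by the move-coincidence observation) and an outcome of $\strat_i$ there, so it is winning for Eve in $\gamei$. Since $\couleur'(x,q,\alpha)=\couleur(q)$ depends only on the state component, the very same infinite sequence of positions induces the same sequence of colours in $\gameo$ as in $\gamei$, and is therefore also winning for Eve in $\gameo$. The main obstacle I expect is the careful bookkeeping for Phase one, specifically verifying the move-coincidence claim against all three kinds of transitions (boolean decomposition, child descent, and $\rel$-jump); the jump case is the only delicate one and is precisely where the synchronous-blindfold nature of $\rel$ enters the argument. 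Once that is in place, the case analysis fits together without further subtlety.
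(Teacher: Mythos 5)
Your proposal is correct and follows essentially the same route as the paper's own proof: splice $\strat_i$ on $\StartG$ with winning strategies at the exit positions $\vkq$, obtained for $k\neq 2^N+1$ via Lemma~\ref{lem-bisim}(1) and for $k=2^N+1$ via the equality of $\visit[\strat_i](y_{2^N+1})$ and $\visit[\strat_j](y_{2^N+1})$ together with Lemma~\ref{lem-bisim}(2), in both cases transferred back to $\gameo$ by Proposition~\ref{prop-game-bisim}. The move-coincidence observation on $\StartG$ and the colour-preservation argument for plays that never leave $\StartG$ are likewise exactly the ones the paper uses.
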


\begin{proof}
We define a strategy $\strat_0$ for Eve in $\gameo$, and we prove that
it is a winning strategy. First, for each
position of the form $\vkq$, if $\vkq$ is a
winning position for Eve in $\gameo$, we pick  a
winning strategy for Eve in $(\gameo,\vkq)$ that we call
$\strat_{\vkq}$. 
Recall that $\Startt=\{\racine,x_1,\ldots,x_{2^N+2}\}$ consists in the two
first levels of $\tree$, and $\StartG=\{(x,q,\alpha)\in V\mid x\in
\Startt\}$.
Take a
  partial play $\rho$ in $\gameo$ ending in a position of  Eve.
\begin{itemize}
\item If $\rho\in \StartG^*$, $\strat_0(\rho)\egdef \strat_i(\rho)$.
\item Otherwise, there exist $\rho'$, $k$, $q$ and $\rho''$
    such that $\rho=\rho' \cdot \vkq \cdot
      \rho''$ and  $\rho'\in\StartG^*$.
  \begin{itemize}
  \item % If $k\neq 2^N+1$ (resp.\ $k=2^N+1$) and $v$ is a winning position for Eve in $\gamei$
%     (resp.\ $\gamej$), then by Lemma~\ref{lem-bisim} and
%   Proposition~\ref{prop-game-bisim}, $v$ is also a winning position
%   for Eve in $\gameo$. 
 If $\vkq$ is a winning position for Eve in $\gameo$, $\strat_{\vkq}$ is
    defined, and we let
  $\strat_0(\rho)\egdef\strat_{\vkq}(v\cdot\rho'')$.
  \item Otherwise, define $\strat_0(\rho)$ arbitrarily.
  \end{itemize}  
\end{itemize}

\begin{lemma}
\label{lem-strat0}
$\strat_0$ is winning for Eve in $\gameo$. 
\end{lemma}
Let $\pi\in\out(\gameo,\strat_0)$. If $\pi\in\StartG^\omega$,
then $\pi$ is also a play in $\gamei$ that, moreover, follows $\strat_i$, which is 
winning for Eve in $\gamei$, so $\pi$ is winning for Eve in
$\gameo$ (recall that  positions have the same colours in the
different acceptance games). 
Otherwise, there exist $\rho$, $k$, $q$ and $\pi'$
    such that $\pi=\rho \cdot \vkq \cdot
      \pi'$ and  $\rho\in\StartG^*$.
% $\rho$, $k$, $q$, and $\pi'$ such that $\pi=\rho\cdot
% (y_k,q,\delta(q,\lab(y_k))) \cdot \pi'$ and $\rho\in\StartG^*$. Write 
% $v\egdef (y_k,q,\delta(q,\lab(y_k)))$.
% Observe that $\lab_0(y_k)=\lab_i(y_k)=\lab_k(y_k)=\{p_{a_0}\}$, that we note $\ell$.
Because $\rho\cdot \vkq$ is a partial play in $\gamei$ that
  follows $\strat_i$, which is winning for Eve in $\gamei$, $\vkq$ is
  a winning position in $\gamei$.  We distinguish two cases.
\begin{itemize}
  \item $k\neq 2^N+1$: since $\vkq$ is a winning position
      for Eve in $\gamei$, by Lemma~\ref{lem-bisim} and
  Proposition~\ref{prop-game-bisim}, $\vkq$ is also a winning position
  for Eve in $\gameo$.  % Therefore, by
%     definition of $\strat_0$, $\vkq\cdot \pi'$ follows  $\strat_{\vkq}$, which
%     is winning for Eve in $\gameo$, so $\pi=\rho\cdot \vkq\cdot \pi'$ is winning for Eve.

\item $k=2^N+1$: necessarily $q\in\visit[\strat_i](y_{2^N+1})$, and
  because $\visit[\strat_i](y_{2^N+1})=\visit[\strat_j](y_{2^N+1})$,
  some outcome of $\strat_j$ in $\gamej$ visits $\vkq$, which makes $\vkq$ a
  winning position for Eve in $\gamej$. % Again, by Lemma~\ref{lem-bisim} and
%   Proposition~\ref{prop-game-bisim}, $\vkq$ is also a winning position
%   for Eve in $\gameo$, so that $\strat_{\vkq}$ is defined, and
%   $\vkq\cdot\pi'$ follows it. Like previously, $\pi'$ is winning for Eve.

\end{itemize}
In both cases, $\strat_{\vkq}$ is defined, and by
  definition of $\strat_0$, $\vkq\cdot\pi'\in\out((\gameo,\vkq),\strat_{\vkq})$.
  Because $\strat_{\vkq}$ is winning for Eve in $(\gameo,\vkq)$,
  $\vkq\cdot\pi'$ verifies the parity condition, and therefore also
  does $\pi=\rho\cdot\vkq\cdot\pi'$.
  So $\pi$ is winning for Eve, and we are done.
%This ends the proof of Lemma~\ref{lem-strat0}, and also the proof of Proposition~\ref{prop-accept0}. %, and therefore of Theorem~\ref{theo-inexpressive}.
\end{proof}

\end{document}